\newcounter{corollary3}
\newtheorem{definition}{Definition}
\newtheorem{proposition}[definition]{Proposition}
\newtheorem{lemma}[definition]{Lemma}
\newtheorem{theorem}[definition]{Theorem}
\newtheorem{corollary}[definition]{Corollary}
\def\squareforqed{\hbox{\rlap{$\sqcap$}$\sqcup$}}
\def\qed{\ifmmode\squareforqed\else{\unskip\nobreak\hfil
\penalty50\hskip1em\null\nobreak\hfil\squareforqed
\parfillskip=0pt\finalhyphendemerits=0\endgraf}\fi}
\def\endenv{\ifmmode\;\else{\unskip\nobreak\hfil
\penalty50\hskip1em\null\nobreak\hfil\;
\parfillskip=0pt\finalhyphendemerits=0\endgraf}\fi}
\newenvironment{proof}{\noindent \textbf{{Proof~} }}{\qed}
\newlength{\blank}
\mathchardef\ordinarycolon\mathcode`\:
\def\vcentcolon{\mathrel{\mathop\ordinarycolon}}
\newcommand{\nc}{\newcommand}
\nc{\rnc}{\renewcommand}
\nc{\beq}{\begin{equation}}
\nc{\eeq}{{\end{equation}}}
\nc{\beqa}{\begin{eqnarray}}
\nc{\eeqa}{\end{eqnarray}}
\nc{\lbar}[1]{\overline{#1}}
\nc{\bra}[1]{\langle#1|}
\nc{\ket}[1]{|#1\rangle}
\nc{\ketbra}[2]{|#1\rangle\!\langle#2|}
\nc{\braket}[2]{\langle#1|#2\rangle}
\nc{\proj}[1]{| #1\rangle\!\langle #1 |}
\nc{\avg}[1]{\langle#1\rangle}
\nc{\Rank}{\operatorname{rank}\,}
\nc{\smfrac}[2]{\mbox{$\frac{#1}{#2}$}}
\nc{\tr}{\operatorname{Tr}}
\nc{\ox}{\otimes}
\nc{\dg}{\dagger}
\nc{\dn}{\downarrow}
\nc{\cA}{{\cal A}}
\nc{\cB}{{\cal B}}
\nc{\cC}{{\cal C}}
\nc{\cD}{{\cal D}}
\nc{\cE}{{\cal E}}
\nc{\cF}{{\cal F}}
\nc{\cG}{{\cal G}}
\nc{\cH}{{\cal H}}
\nc{\cI}{{\cal I}}
\nc{\cJ}{{\cal J}}
\nc{\cK}{{\cal K}}
\nc{\cL}{{\cal L}}
\nc{\cM}{{\cal M}}
\nc{\cN}{{\cal N}}
\nc{\cO}{{\cal O}}
\nc{\cP}{{\cal P}}
\nc{\cR}{{\cal R}}
\nc{\cS}{{\cal S}}
\nc{\cT}{{\cal T}}
\nc{\cX}{{\cal X}}
\nc{\cZ}{{\cal Z}}
\nc{\csupp}{{\operatorname{csupp}}}
\nc{\qsupp}{{\operatorname{qsupp}}}
\nc{\var}{\operatorname{var}}
\nc{\rar}{\rightarrow}
\nc{\lrar}{\longrightarrow}
\nc{\polylog}{\operatorname{polylog}}
\nc{\RR}{{{\mathbb R}}}
\nc{\CC}{{{\mathbb C}}}
\nc{\FF}{{{\mathbb F}}}
\nc{\NN}{{{\mathbb N}}}
\nc{\ZZ}{{{\mathbb Z}}}
\nc{\PP}{{{\mathbb P}}}
\nc{\QQ}{{{\mathbb Q}}}
\nc{\UU}{{{\mathbb U}}}
\nc{\EE}{{{\mathbb E}}}
\nc{\id}{{\operatorname{id}}}
\nc{\be}{\begin{equation}}
\nc{\ee}{{\end{equation}}}
\nc{\bea}{\begin{eqnarray}}
\nc{\eea}{\end{eqnarray}}
\nc{\Hom}[2]{\mbox{Hom}(\CC^{#1},\CC^{#2})}
\nc{\rU}{\mbox{U}}
\nc{\ob}[1]{#1}
\nc{\LO}{\text{LO}}
\nc{\LOCC}{\text{LOCC}}
\nc{\cLOCC}{{\overline{\text{LOCC}}}}
\nc{\SEP}{\text{SEP}}
\nc{\PPT}{\text{PPT}}
\newcommand*{\anti}{{\,\yng(1,1)}}
\newcommand*{\sy}{{\,\yng(2)}}
\newcommand*{\complex}{\mathbf{C}}
\newcommand*{\half}{\frac{1}{2}}
\newcommand*{\singlebox}{{\,\yng(1)}}
\newcommand*{\col}{{\,\yng(1,1,1,1)}}
\newcommand*{\boxx}{{\,\yng(2,2)}}
\newcommand*{\hook}{{\,\yng(2,1,1)}}
\newcommand*{\sym}{\mathrm{Sym}}
\newcommand*{\sign}{\operatorname{sign}\,}
\newcommand*{\dcol}{{  \textrm{{\scriptsize d}} \bigg\{ \, \yng(1,1,1,1,1,1,1) }}
\newcommand*{\dhook}{{\textrm{{\scriptsize d-1}} \Big\{ \,\yng(2,1,1,1,1,1)}}
\newcommand*{\dboxx}{{ \textrm{{\scriptsize d-2}} \Big\{\,\yng(2,2,1,1,1)}}
\newcommand*{\dualcol}{{  \textrm{{\scriptsize d-2}} \Big\{ \, \yng(1,1,1,1,1) }}
\begin{document}

\title{Entanglement of the Antisymmetric State}

\author{Matthias Christandl}
\affiliation{Institute for Theoretical Physics, ETH Zurich, Wolfgang-Pauli-Strasse 27, 8093 Zurich, Switzerland}
\email{christandl@phys.ethz.ch}

\author{Norbert Schuch}
\affiliation{Institute for Quantum Information, California Institute of
Technology, Pasadena CA 91125, U.S.A.}
\affiliation{Max-Planck-Institut f\"ur Quantenoptik, Hans-Kopfermann-Str.~1, D-85748 Garching, Germany}
\email{norbert.schuch@googlemail.com}

\author{Andreas Winter}
\affiliation{Department of Mathematics, University of Bristol, Bristol BS8 1TW, U.K.}
\affiliation{Centre for Quantum Technologies, National University of Singapore, 2 Science Drive 3, Singapore 117542}
\email{a.j.winter@bris.ac.uk}


\begin{abstract}
We analyse the entanglement of the antisymmetric state in dimension $d \times d$ and present two main results. First, we show that the amount of secrecy that can be extracted from the state is low, more precisely, the distillable key is bounded by $O(\frac{1}{d})$. Second, we show that the state is highly entangled in the sense that a large number of ebits are needed in order to create the state: entanglement cost is larger than a constant, independent of $d$. The second result is shown to imply that the regularised relative entropy with respect to separable states is also lower bounded by a constant. 
%
Finally, we note that the regularised relative entropy of entanglement is asymptotically 
continuous in the state.

Elementary and advanced facts from the representation theory of the unitary group,
including the concept of plethysm, play a central role in the proofs of the
main results.
\end{abstract}

\maketitle

\Yboxdim{4pt}

\section{Introduction}
Entanglement is a quantum phenomenon governing the correlations between two quantum systems. It is both responsible for Einstein's ``spooky action at a distance''~\cite{Einstein1971} as well as the security of quantum key distribution~\cite{BB84,E91}. Quantum key distribution, or QKD for short, is a procedure to distribute a perfectly secure key among two distant parties, something that is not possible in classical cryptography without assumptions on the eavesdropper. 

In the early days of quantum information theory, it was quickly realised that the universal resource for bipartite entanglement is the ebit, that is, the state $\ket{\psi}:=\frac{1}{\sqrt{2}} (\ket{00}+\ket{11})$~\cite{BBPS96}. Ebits are needed for teleportation~\cite{teleportation}, superdense coding~\cite{superdensecoding} and directly lead to secret bits~\cite{E91, BBM92}. It is therefore natural to associate the usefulness of a quantum state with the amount of ebits that can be extracted from it or the amount of ebits needed to create the state~\cite{BDSW96}. Formally, one considers the \emph{distillable entanglement}
\begin{equation}
  \label{eq:E_D}
  E_D(\rho)=\lim_{\epsilon \rightarrow 0} \lim_{n \rightarrow \infty} \sup_{\Lambda_n \in \text{ LOCC} } 
  \left\{ \frac{m}{n}: \|\Lambda_n(\rho^{\otimes n})-\psi^{\otimes m} \|_1 \leq \epsilon \right\},
\end{equation}
and the \emph{entanglement cost}
\begin{equation}
  \label{eq:E_C}
  E_C(\rho)=\lim_{\epsilon \rightarrow 0} \lim_{n \rightarrow \infty} \inf_{\Lambda_n \in \text{ LOCC} } 
  \left\{ \frac{m}{n}: \|\Lambda_n(\psi^{\otimes m})-\rho^{\otimes n} \|_1 \leq \epsilon \right\},
\end{equation}
where the supremum and infimum ranges over all completely positive trace
preserving (CPTP) maps that can be obtained from local operations and
classical communication (LOCC) on the state (this is, operations which can
be implemented using a multi-round protocol where in every round, both
parties carry out some local operation, followed by an
exchange of classical information~\cite{nielsen-chuang}).
For ease of notation we write $\psi$ short for $\proj{\psi}$.

An important result relating to these quantities has been the discovery of bound entanglement, that is of states that need ebits for their creation but from which no ebits can be extracted asymptotically: $E_C(\rho)>0$ and $E_D(\rho)=0$~\cite{bound-e-paper}. A recent surprise has been the realization that there exist bound entangled states from which secrecy can be extracted~\cite{horodecki-2005-94}, a result that overthrew previous beliefs that secrecy extraction and entanglement distillation would go hand in hand.

This has motivated research into the amount of key that can be distilled from a quantum state as an entity in its own right. The \emph{distillable key} is defined as
\begin{equation}
  K_D(\rho_{AB}) = \lim_{\epsilon \rightarrow 0} \lim_{n \rightarrow \infty} \sup_{\Lambda_n \text{ LOCC}, \gamma_m } \left\{ \frac{m}{n}: \| \Lambda_n(\rho^{\otimes n})-\gamma_m \|_1 \leq \epsilon \right\},
  \label{eq:K_D}
\end{equation}
where $\gamma_m$ denotes a quantum state which contains $m$ bits of pure secrecy 
(see Definition~\ref{def:secrecy}).

A fundamental question at this point is this: Do there exist states which
require key to create them but from which no key can be distilled? 
Note that a mathematical formulation of this question appears to require
the definition of a ``key cost'' of a state, which is problematic
since the states $\gamma_m$ containing $m$ bits of 
pure key, the \emph{private states} of Definition~\ref{def:secrecy},
form a heterogenous class of states which are not all equivalent to
each other.
Even the weaker form of this question, whether there exist states with
$E_C(\rho) > 0$ but $K_D(\rho) = 0$, seems too difficult at the moment,
since we have apart from the separability of $\rho$ no criterion for $K_D(\rho) = 0$.
Here we show that in an asymptotic sense the answer is yes: in the spirit
of~\cite{RennerWolf:boundkey}, we show that there exists a family of
states with constant lower bound on their entanglement cost, but
arbitrarily small distillable key. These results have been 
previously reported in~\cite{ChristandlHighly2009}.

In order to derive this result, we make use of the theory of entanglement
with its many entanglement measures. The motivation for this is the
following. Due to the asymptotic nature of the definitions it is a difficult task to
evaluate the distillable entanglement, the entanglement cost and the
distillable key on specific quantum states. All three quantities have in common that they
measure the amount of entanglement in a quantum state, i.e. they do not
increase under LOCC operations, they
vanish on separable states (i.e., states which can be written as 
a convex combination of product states, $\rho=\sum_i p_i
\rho_i^A\otimes \rho_i^B$),
and they equal one when evaluated on an ebit.
This has led to an axiomatisation of the quantities that measure
entanglement and to the definition of a whole zoo of entanglement
measures (cf.~\cite{christandlPhD}). One of the main uses of all the new entanglement measures is
that they are mostly sandwiched between distillable entanglement (or even
distillable key) and entanglement cost and hence form upper and lower
bounds for these quantities. Even though these new entanglement measures
often involve complicated minimisations or asymptotic limits they are
sometimes easier to calculate than distillable entanglement, distillable
key and entanglement cost. 

The states for which entanglement measures have been calculated are typically 
characterised by their symmetry. The most prominent example are so-called 
Werner states in dimension $d\times d$, defined by the
property
\[
  (g\otimes g) \rho (g^\dagger \otimes g^\dagger) = \rho
\]
for all $g\in U(d)$, the unitary group. Werner states can be given explicitly as the one parameter family
\[
  \rho= p \sigma_d+(1-p)\alpha_d,
\]
where $p\in [0,1]$.
Here, $\sigma_d$ is the state proportional to the projectors onto the 
symmetric subspace and $\alpha_d$ is the state proportional to the projector 
onto the antisymmetric subspace. In this work we will bound the value of 
certain entanglement measures for the totally antisymmetric states $\alpha_d$.

The first entanglement measure we use is the squashed entanglement~\cite{squashed},
\begin{equation}
  E_{sq}(\rho_{AB})=\inf_{\rho_{ABE}: \rho_{AB}=\tr_E \rho_{ABE}} \half I(A;B|E)_{\rho},
  \label{eq:E_sq}
\end{equation}
where $I(A;B|E)_\rho=H(AE)_\rho+H(BE)_\rho-H(ABE)_\rho-H(E)_\rho$ is the quantum conditional mutual information,
with $H(X)_\rho=H(\rho_X)$ the von Neumann entropy of the reduced
state on $X$, $H(\sigma):=-\tr\sigma\log_2\sigma$.
We show that squashed entanglement is an upper bound on the distillable key and hence establish the chain of inequalities
\begin{equation}
  E_D \leq K_D \leq E_{sq} \leq E_C.
  \label{eq:chain}
\end{equation}
A concrete calculation of a bound on the squashed entanglement of the antisymmetric states will yield our first main result, an upper bound on the distillable key. 
\begin{theorem}\label{th:upper}
\begin{align}
  \label{eq:KD-upper}
  K_D(\alpha_d) &\leq \left. \begin{cases}
                               \log_2\frac{d+2}{d} & \text{ if } d \text{ is even}    \\
                               \half\log_2\frac{d+3}{d-1} & \text{ if } d \text{ is odd}
                             \end{cases}
                       \right\} = O\left(\frac{1}{d}\right).
\end{align}
\end{theorem}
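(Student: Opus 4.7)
The plan is to combine the already-indicated chain $K_D \le E_{sq}$ with a direct upper bound on $E_{sq}(\alpha_d)$ obtained from a specific family of extensions. For a parameter $k$ with $1 \le k \le d-1$, the natural candidate extension is the fully antisymmetric state $\alpha_d^{(k+1)}$ on $(\mathbb{C}^d)^{\otimes(k+1)}$, i.e.\ the normalized projector onto the totally antisymmetric subspace of dimension $\binom{d}{k+1}$, with $A$ and $B$ the first two factors and $E$ the remaining $k-1$.

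First I would verify that this is indeed an extension of $\alpha_d$. The key representation-theoretic fact is that partial-tracing $\alpha_d^{(m)}$ over any single subsystem yields $\alpha_d^{(m-1)}$; this follows from a short Slater-determinant calculation, where tracing the last site of a Slater determinant $|A_{i_1\ldots i_m}\rangle\langle A_{i_1\ldots i_m}|$ produces the uniform mixture of the $m$ Slater determinants obtained by omitting one index, and summing over all $\binom{d}{m}$ size-$m$ subsets rebuilds $\alpha_d^{(m-1)}$ after accounting for multiplicity. Iterating gives $\tr_E \alpha_d^{(k+1)} = \alpha_d$, and along the way one learns that every $j$-party marginal of $\alpha_d^{(k+1)}$ is maximally mixed on the antisymmetric subspace of $(\mathbb{C}^d)^{\otimes j}$, so $H(\alpha_d^{(j)}) = \log_2 \binom{d}{j}$.

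With this in hand the conditional mutual information in the chosen extension evaluates to
\[
I(A;B|E) = 2\log_2\binom{d}{k} - \log_2\binom{d}{k+1} - \log_2\binom{d}{k-1} = \log_2 \frac{(k+1)(d-k+1)}{k(d-k)},
\]
using the standard binomial ratios $\binom{d}{k\pm 1}/\binom{d}{k}$. Finally I would optimize over $k$: the right-hand side is invariant under $k \mapsto d-k$ and minimized at the centre of $\{1,\ldots,d-1\}$. For even $d$, taking $k=d/2$ gives ratio $((d+2)/d)^2$ and hence $E_{sq}(\alpha_d) \le \log_2\tfrac{d+2}{d}$; for odd $d$, taking $k=(d-1)/2$ gives ratio $(d+3)/(d-1)$ and hence $E_{sq}(\alpha_d) \le \tfrac{1}{2}\log_2\tfrac{d+3}{d-1}$, which together with $K_D \le E_{sq}$ yields the claimed $O(1/d)$ bound.

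I expect no serious obstacle once the extension is identified; the real content is the observation that $\alpha_d^{(k+1)}$ spreads the two-body antisymmetric correlations uniformly across $k+1$ parties, so that conditioning on a well-chosen $k-1$ of them leaves $A$ and $B$ almost uncorrelated. The remaining work is just counting dimensions of totally antisymmetric irreducible representations of $\mathrm{U}(d)$ and simplifying binomial coefficients.
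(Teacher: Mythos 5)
Your proposal is correct and follows essentially the same route as the paper: the paper's Lemma~\ref{lemma:squashedupper} uses exactly this extension (the normalised antisymmetric projector on $k$ copies of $\CC^d$, which is your $k+1$ after an index shift), computes the same conditional mutual information $\log_2\frac{k}{k-1}\frac{d-k+2}{d-k+1}$, and optimises at the midpoint, while the marginal computation you spell out via Slater determinants is simply asserted there. The only piece you take as given, $K_D\le E_{sq}$, is itself half of the paper's proof (Lemma~\ref{lemma:keysquashed}, via LOCC monotonicity and asymptotic continuity of squashed entanglement together with $E_{sq}(\gamma_m)\ge m$), so be aware that a fully self-contained argument would need to supply it.
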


In order to find a lower bound on the entanglement cost of the antisymmetric state, we will use its charaterisation as the regularised entanglement of formation $E_C=E_F^\infty$. The entanglement of formation is defined as 
\begin{equation}
  \label{eq:E_F}
  E_F(\rho) = \min_{\{p_i, \proj{\varphi}_i \}_i: \rho = \sum_i p_i \proj{\varphi_i} }\sum_i p_i H\bigl(\tr_B \proj{\varphi_i}\bigr),
\end{equation}
and its regularisation is given by
\begin{align} \label{regEoF} 
        E_F^\infty(\rho):=\lim_{n\rightarrow\infty} \frac{1}{n} E_F\bigl( \rho^{\ox n} \bigr). \end{align}
Making heavy use of the symmetry of the antisymmetric state we will relax the minimisation in the definition of the entanglement of formation to a linear programme and obtain the second main result of this paper.
\begin{theorem}  \label{th:EC-lower}
$\displaystyle{E_C(\alpha_d) \geq \log_2 \frac{4}{3} \approx 0.415.}$
\end{theorem}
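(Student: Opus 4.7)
The plan is to invoke the identity $E_C(\rho)=E_F^\infty(\rho)=\lim_{n\to\infty}\frac{1}{n}E_F(\rho^{\otimes n})$ of Hayden--Horodecki--Terhal and thereby reduce the theorem to a uniform lower bound of the form $E_F(\alpha_d^{\otimes n})\geq n\log_2(4/3)-o(n)$. Fix any pure-state decomposition $\alpha_d^{\otimes n}=\sum_i p_i\proj{\varphi_i}$ with $\ket{\varphi_i}\in(\wedge^2\CC^d)^{\otimes n}$. I would first exploit the large symmetry group $G$ of $\alpha_d^{\otimes n}$: besides the $S_n$-action permuting the $n$ pairs, it contains the diagonal $U(d)$-action $(g\otimes g)^{\otimes n}$, as well as the local diagonal unitaries $\bigotimes_{k=1}^n(g_k\otimes g_k)$ for arbitrary $g_k\in U(d)$. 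Every element of $G$ acts as a \emph{local} unitary across the $A^n|B^n$ cut, so twirling the ensemble over $G$ preserves the average entropy $\sum_ip_iH(\rho_A^{(i)})$, and the infimum defining $E_F$ may be restricted to $G$-covariant decompositions.

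Next, I would apply Schur--Weyl duality to write $A^n=(\CC^d)^{\otimes n}=\bigoplus_\lambda V_\lambda^{U(d)}\otimes U_\lambda^{S_n}$. Each reduced state $\rho_A^{(i)}$ then splits across blocks with weights $q_\lambda^{(i)}=\tr(\Pi_\lambda\rho_A^{(i)})$. The Young diagrams $\lambda$ that may carry mass are controlled by the decomposition of $(\wedge^2\CC^d)^{\otimes n}$ as a $U(d)\times S_n$-representation, which is given by Schur plethysms of the form $s_\mu[s_{1,1}]$ with $\mu\vdash n$; this is the representation-theoretic input hinted at by the abstract. Combined with the global constraint $\sum_ip_i\rho_A^{(i)}=(\1/d)^{\otimes n}$, whose Schur--Weyl block masses are $\dim V_\lambda\cdot\dim U_\lambda/d^n$, the minimisation of $\sum_ip_iH(\rho_A^{(i)})$ relaxes to a finite linear program in the block-weight variables $\{p_iq_\lambda^{(i)}\}$, with within-block entropies controlled by dimension bounds.

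Solving or suitably lower-bounding this LP, and then taking $n\to\infty$, should yield the constant $\log_2(4/3)=2-\log_2 3$. The main obstacle will be the plethysm/LP bookkeeping: the plethysm multiplicities $\langle s_\mu[s_{1,1}],s_\lambda\rangle$ have no closed form, so one must work with asymptotic dimension estimates that are sharp enough to extract the explicit constant $\log_2(4/3)$ rather than merely some unspecified positive number. The value $\log_2(4/3)$ itself presumably arises from balancing, in the LP's optimal solution, few-row shapes (where $\dim V_\lambda$ is small and entropy is ``cheap'') against the rigid total-mass constraint imposed by $(\1/d)^{\otimes n}$.
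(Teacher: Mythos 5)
Your overall strategy (reduce to $E_F^\infty$ via $E_C=E_F^\infty$, exploit the symmetry group of $\alpha_d^{\ox n}$, and set up a representation-theoretic linear programme) starts on the same footing as the paper, but the relaxation you then propose has a genuine gap: it cannot produce a lower bound that grows linearly in $n$. The data you retain are the Schur--Weyl block weights $q_\lambda^{(i)}=\tr(\Pi_\lambda\rho_A^{(i)})$ together with the average-marginal constraint $\sum_i p_i\rho_A^{(i)}=(\1/d)^{\ox n}$. Neither forces the individual reduced states to be mixed: dimension bounds only give \emph{upper} bounds on the within-block entropies (nothing in your LP prevents an ensemble member from being essentially pure inside a block); the entropy of the block distribution itself is at most the logarithm of the number of partitions of $n$ into at most $d$ parts, which is $O(d\log n)=o(n)$; and a constraint on the \emph{average} marginal is compatible with every individual marginal being pure --- this is exactly the situation for separable states. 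Moreover, the reduced state of an individual $\ket{\varphi_i}\in(\wedge^2\CC^d)^{\ox n}$ is not block-diagonal in the Schur--Weyl decomposition of $A^n$, and the pinching needed to make it so only \emph{increases} entropy, so it cannot be invoked in a lower bound. The mechanism that actually makes each $\rho_A^{(i)}$ mixed is the membership $\ket{\varphi_i}\in(\wedge^2\CC^d)^{\ox n}$ as a joint constraint on $A^nB^n$, and your relaxation discards it after merely recording which $\lambda$ may occur.

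The paper's proof keeps a quadratic handle on that constraint. It first lower-bounds the von Neumann entropy by the R\'enyi-2 entropy, reducing the problem to upper-bounding the purity $\max_{\ket{\psi}\in\anti^{\ox n}}\tr\psi_{A^n}^2$. The doubling identity $\tr\psi_{A^n}^2=\tr(\psi\ox\psi)(F_{A^n:A'^n}\ox\1)$ makes the objective linear in the two-copy state $\psi\ox\psi$, which after twirling is supported on $\sym^2(\anti^{\ox n})$ and is \emph{separable} across $A^nB^n:A'^nB'^n$; the plethysms that enter are $\sym^2(\wedge^2)\cong\col\oplus\boxx$ and $\wedge^2(\wedge^2)\cong\hook$ per copy --- arising from the doubling, not the $s_\mu[s_{1,1}]$ with $\mu\vdash n$ where your proposal locates them. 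The separability of the doubled state, relaxed to PPT, is the constraint that yields the nontrivial bound, and an explicit dual feasible point of the resulting linear programme gives purity at most $(3/4)^n$, hence $E_F(\alpha_d^{\ox n})\geq n\log_2\frac{4}{3}$. Without some analogue of this doubled, separable/PPT structure, your LP in block weights returns only an $o(n)$ bound and hence $E_C(\alpha_d)\geq 0$.
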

It is not difficult to see that the entanglement of formation of $\alpha_d$ equals one and hence that the truth of the additivity conjecture for entanglement of formation would have implied $E_C(\alpha_d)=1$. Since Hastings has provided a counterexample~\cite{Hastings} to the additivity conjecture~\cite{Shor:equivalences}, this consequence is put into doubt and the only evidence for $E_C(\alpha_d)=1$ was Yura's brute force calculation which proved this statement for $d=3$. Our result can therefore be seen as supporting evidence for $E_C(\alpha_d)=1$, and at least provides a further example where some weak form of additivity holds. At present the techniques in this paper are not sufficient to prove $E_C(\alpha_d)=1$, but further development may be capable of doing so.

Using the tools developed to prove Theorem~\ref{th:EC-lower}, we obtain a
lower bound to the regularised relative entropy of entanglement with
respect to separable states.
\begin{corollary}
        \label{cor:ERinf_lowerbnd}
$\displaystyle{E_{R,\mathrm{sep}}^\infty(\alpha_d) \geq \log_2 \sqrt{
\frac{4}{3}} \approx 0.2075.}$ 
\end{corollary}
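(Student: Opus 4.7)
My plan is to bound the relative entropy from below through the Uhlmann fidelity and then feed the resulting fidelity estimate into the same linear-programming and representation-theoretic machinery used in the proof of Theorem~\ref{th:EC-lower}. The starting point is the standard inequality $D(\rho\|\sigma) \geq -\log_2 F(\rho,\sigma)^2$ with $F(\rho,\sigma) = \|\sqrt{\rho}\sqrt{\sigma}\|_1$ the Uhlmann fidelity; this follows from the monotonicity of relative entropy under measurement combined with the classical Pinsker inequality, or equivalently from the monotonicity of the Petz--R\'enyi relative entropy at order $\tfrac{1}{2}$.

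Next I would exploit that $\alpha_d^{\otimes n}$ is the maximally mixed state on the image of the antisymmetric projector $\Pi := \Pi_{\mathrm{as}}^{\otimes n}$, of rank $\binom{d}{2}^n$. A Cauchy--Schwarz estimate --- equivalently, data processing against the two-outcome POVM $\{\Pi,\1-\Pi\}$ --- gives $F(\alpha_d^{\otimes n},\sigma)^2 \leq \tr(\Pi\sigma)$ for any $\sigma$, and hence
\[
  D\bigl(\alpha_d^{\otimes n}\,\big\|\,\sigma\bigr) \;\geq\; -\log_2\tr\bigl(\Pi_{\mathrm{as}}^{\otimes n}\sigma\bigr)
\]
for every separable $\sigma$. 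The problem is thereby reduced to proving that $\sup_{\sigma\in\SEP}\tr(\Pi_{\mathrm{as}}^{\otimes n}\sigma) \leq (3/4)^{n/2}\cdot\mathrm{poly}(n)$, which is where all the substantive work lies.

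This upper bound is the step I expect to be the main obstacle, and it is where the tools of Theorem~\ref{th:EC-lower} come in. By twirling $\sigma$ under $g^{\otimes n}\otimes g^{\otimes n}$ for $g\in U(d)$ and by symmetrising over permutations of the $n$ pair-subsystems, one may assume without loss that $\sigma$ is $(U(d)\times S_n)$-invariant while still separable; the invariant separable cone is then pinned down by Schur--Weyl duality and by the plethysm decomposition of $(\wedge^2\CC^d)^{\otimes n}$. This reduces $\sup_\sigma \tr(\Pi\sigma)$ to a finite-dimensional linear programme in the same Young-diagram variables used to prove Theorem~\ref{th:EC-lower}, the only structural change being that the convex-roof functional of $E_F$ is replaced by the \emph{linear} functional $\sigma\mapsto\tr(\Pi\sigma)$, which should actually render the LP more tractable. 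A dual feasible solution analogous to the one employed for Theorem~\ref{th:EC-lower} will then deliver the claimed $(3/4)^{n/2}$ estimate.

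Taking logarithms, dividing by $n$, and letting $n\to\infty$ then yields the desired bound $E_{R,\mathrm{sep}}^\infty(\alpha_d)\geq\tfrac{1}{2}\log_2(4/3)=\log_2\sqrt{4/3}$. The factor $\tfrac{1}{2}$ relative to Theorem~\ref{th:EC-lower} is an artefact of the Cauchy--Schwarz estimate $F^2 \leq \tr(\Pi\sigma)$, which is tight only when $\Pi\sqrt{\sigma}$ has rank one and otherwise costs exactly half of the exponent.
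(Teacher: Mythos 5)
Your reduction of the problem to bounding $\max_{\sigma\in\SEP}\tr\bigl(P_{\anti}^{\ox n}\sigma\bigr)$ is sound: the inequality $D(\rho\|\sigma)\geq-\log_2F(\rho,\sigma)^2$ together with $F^2\leq\tr(\Pi\sigma)$ is correct, and it recovers (as an inequality) what the paper obtains as an exact evaluation after twirling, namely $E_{R,\mathrm{sep}}(\alpha_d^{\ox n})=-\log_2\max_\sigma\tr\sigma P_{\anti}^{\ox n}$. The genuine gap is in the step you yourself flag as the substantive one: the bound $\max_{\sigma\in\SEP}\tr(\Pi\sigma)\leq(3/4)^{n/2}$ is asserted, not proved. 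The linear programme you propose for it is \emph{not} the one from Theorem~\ref{th:EC-lower}: that LP lives on the doubled system $A^nB^nA'^nB'^n$ (introduced precisely to linearise the quadratic purity functional) and uses the plethysm decomposition of $\anti^{\ox 2}$, whereas your $\sigma$ lives on $A^nB^n$ with the separability cut $A^n{:}B^n$; after twirling, the natural variables are mixtures of tensor products of Werner states, the relevant PPT/separability constraints are entirely different, and there is no reason a dual point "analogous" to the one in Lemma~\ref{lemma:threequarters} exists or yields $(3/4)^{n/2}$. You would have to set up and solve this new LP from scratch, and nothing in your sketch guarantees its value.

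What you are missing is the observation that makes the statement a two-line corollary of Theorem~\ref{th:EC-lower}: since a linear functional is maximised over $\SEP$ at a product pure state,
\[
\max_{\sigma\in\SEP}\tr\sigma P_{\anti}^{\ox n}
=\max_{\ket{\alpha},\ket{\beta}}\bra{\alpha}\bra{\beta}P_{\anti}^{\ox n}\ket{\alpha}\ket{\beta}
=\max_{\ket{\psi}\in\anti^{\ox n}}\bigl\|\tr_{B^n}\proj{\psi}\bigr\|_\infty
\leq\sqrt{\max_{\ket{\psi}\in\anti^{\ox n}}\tr\psi_{A^n}^2}\leq(3/4)^{n/2},
\]
using the Schmidt decomposition and $\|\rho\|_\infty\leq\sqrt{\tr\rho^2}$, with the last bound taken directly from Lemma~\ref{lemma:threequarters}. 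This also corrects your closing remark: the factor $\tfrac12$ in the exponent does not come from the Cauchy--Schwarz step $F^2\leq\tr(\Pi\sigma)$ (which in fact loses nothing here, since the relative entropy equals $-\log_2\tr(\Pi\sigma)$ after twirling); it comes from passing from the largest Schmidt coefficient to the square root of the purity.
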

Here, the relative entropy of entanglement (with respect to separable states) is defined as
\begin{align*}
E_{R,\mathrm{sep}}(\rho):=\min_{\sigma \  \text{separable}} D(\rho||\sigma),
\end{align*}
where $D(\rho||\sigma):=\tr \rho [\log \rho -\log \sigma]$,
and the regularised relative entropy of entanglement is
\begin{align} \label{def:relentasympt}
E_{R,\mathrm{sep}}^\infty(\rho)=\lim_{n\rightarrow\infty}\tfrac{1}{n}
E_{R,\mathrm{sep}}(\rho^{\otimes n}).
\end{align}
From the point
of view of entanglement theory, this result is interesting for at least
three reasons. First, it shows that the additivity violation of the
relative entropy of entanglement for the antisymmetric state, first
observed in~\cite{VollbrechtWerner01}, is not very strong in the
asymptotic limit. Secondly, the regularised relative entropy of
entanglement with respect to separable states behaves very differently from
the relative entropy of entanglement with respect to PPT states, as the
latter takes the value $\log_2 \frac{d+2}{d}$ on $\alpha_d$~\cite{audenaert-2001-87}.
Thirdly, it shows that the relative entropy of entanglement can sometimes
be larger and sometimes be smaller than the squashed entanglement.
Finally, we note that as an entanglement measure, the relative entropy of
entanglement with respect to separable states
satisfies~\cite{horodecki-2005-94} 
\[
  E_D \leq K_D \leq E_{R,\mathrm{sep}}^\infty \leq E_C. 
\]
and that it is asymptotically continuous, as we show in Proposition~\ref{prop-rel-ent-cont}. 

In order to derive both main results of the paper we make use of the symmetry 
properties of the antisymmetric state and the associated representation theory 
of the unitary group in dimension $d$~\cite{FultonHarris91}. 
For the lower bound on entanglement cost, we relax the calculation of 
$E_F(\alpha_d^{\otimes n})$ in a first step into a semidefinite programme which 
we reduce in a second step with the help of representation theory (for the 
first time using the concept of a plethysm in quantum information theory) into 
a linear programme~\cite{LP-book}. We then find a feasible point of the dual 
for the latter, which results in our lower bound of $\log_2 \frac{4}{3}$ 
for entanglement cost. On the way we recover Yura's result for $d=3$.

The rest of the paper is organised as follows. In Section~\ref{sec:prelim} we introduce the notation from representation theory that will be used throughout the paper. In Section~\ref{sec:key} we prove the upper bound on the squashed entanglement and distillable key. 
In Section~\ref{sec:cost} we exhibit the sequence of relaxations that will lead to the lower bound on the entanglement cost. In Section~\ref{sec:relent} we will derive the lower bound on regularised relative entropy of entanglement of the antisymmetric state with respect to separable states. Furthermore, we establish that it is asymptotically continuous as a function of the state. We will conclude the paper with remarks and open questions in Section~\ref{sec:conclusion}. The appendices contain details on the representation-theoretic calculations and the linear programme.

\section{Representation theoretic preliminaries}\label{sec:prelim}

Representations of the unitary group $U(d)$ can be taken to be unitary and 
decompose into a direct sum of irreducible representations. The latter are 
classified according to their highest weight. For each dominant weight 
$\lambda $, i.e. $\lambda=(\lambda_1, \ldots, \lambda_d)$ with 
$\lambda_i \geq \lambda_{i+1} \in {\bf Z}$ there is exactly one irreducible 
representation $V_\lambda$. When $\lambda_d \geq 0$, we write 
$\lambda \vdash_d n$ if $n:=|\lambda |:=\sum_i \lambda_i$. 
Such $V_\lambda$ can be viewed as a subrepresentations of the $n$-fold 
diagonal action of the unitary group on $(\complex^d)^{\otimes n}$:
$$T^{n}: g \mapsto g^{\otimes n},$$
since by Schur-Weyl duality
$$T^{n}\cong \bigoplus_{\lambda \vdash_d n} V_\lambda \otimes \complex^{\dim [\lambda]},$$
where $[\lambda]$ denotes the $S_n$-Specht module corresponding to the Young frame $\lambda$. In the following we will often use the interpretation of $\lambda$ as a Young frame, i.e. as a diagram of boxes arranged in $d$ rows with $\lambda_i$ boxes in row $i$, and use the corresponding diagrammatic notation. As a vector space, $V_\lambda$ can be constructed as the image of the Young symmetriser, a certain element in the group algebra of $S_n$, when applied to $(\complex^d)^{\otimes n}$. 
The projector onto $V_\lambda$ is denoted by $P_\lambda$.

Two types of representations are of particular importance. First, the
symmetric representations with Young diagram $\lambda=(n, 0, \ldots, 0)$
which act on the totally symmetric subspace $\sym^n(\complex^d) $ of
$(\complex^d)^{\otimes n}$. Second, the fundamental representations with
Young diagram $\lambda=(1, 1, \ldots, 1, 0, 0, \ldots, 0)$ which act on
the totally antisymmetric subspace $\wedge^n(\complex^d)$ of
$(\complex^d)^{\otimes n}$. Note that the latter are zero-dimensional for
$d<n$. 

The dimension of $V_\lambda$ is given by Weyl's dimension formula 
\begin{align} \label{eq:Weyldim}
  \dim V_\lambda=\frac{\prod_{i<j} (\lambda_i-\lambda_j-i+j)}{\prod_{k=1}^{d-1} k!}
\end{align}
and specializes in the case of a fundamental representation to $\binom{d}{n}$.

The first case of interest to us is $n=2$, where
$$T^2 \cong V_{(1,1)} \oplus V_{(2,0)},$$ 
or in diagrammatic notation 
$$\singlebox^{\otimes 2}\cong \anti \oplus \sy\;.$$
It then follows immediately from Schur's lemma that the $U(d)$-invariant states on this space must be of the form
$$\rho=p \sigma_d +(1-p) \alpha_d,$$
where $p\in [0, 1]$ and with the totally antisymmetric and totally symmetric states
$$\sigma_d= \frac{2}{d(d+1)} P_\sy\;,$$
$$\alpha_d= \frac{2}{d(d-1)} P_\anti\;,$$
respectively. Note that we suppress the dependence on $d$ when the dimension is clear from the context.
Later we will compute similar decompositions of more complicated type.

\section{Upper bound on the distillable key}
\label{sec:key}
In this section we will first show that squashed entanglement is an upper bound to the amount of key that one can distill from quantum states. Then we will find an upper bound on squashed entanglement of the antisymmetric state. Together, this proves Theorem~\ref{th:upper}.

Recall the definition of squashed entanglement and the definition of the key rate. The latter contains a maximisation over private states that contain $m$ bits of pure secrecy, the formal definition of such states follows.

\begin{definition}[\cite{horodecki-2005-94}]\label{def:secrecy}
A private state containing $m$ bits of secrecy is a state $\gamma_m$ of the form
$$\gamma_m=U \sigma_{AA'BB'}U^\dagger$$ for some unitary
$U=\sum_i \proj{ii} \otimes U_i$ and
$\sigma_{AA'BB'}=\Phi_{AB} \otimes \sigma_{A'B'}$, where
$\ket{\Phi}=\frac{1}{\sqrt{2^m}} \sum_{i=1}^{2^m} \ket{i}\ket{i}$ is
the maximally entangled state of rank $2^m$. 
System $AB$ is known as the \emph{key part} of the state and system 
$A'B'$ is known as the \emph{shield part}.
\end{definition}

\begin{lemma}[\cite{christandlPhD}]
  \label{lemma:keysquashed} 
  For all bipartite quantum states $\rho_{AB}$,
  $$K_D(\rho_{AB})\leq E_{sq}(\rho_{AB}).$$
\end{lemma}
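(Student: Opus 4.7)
The plan is to deduce $K_D\le E_{sq}$ from three standard properties of squashed entanglement---LOCC monotonicity, subadditivity on tensor products, and asymptotic continuity (all established in~\cite{squashed})---together with the normalisation $E_{sq}(\gamma_m)\ge m$ on private states (Definition~\ref{def:secrecy}). Given these ingredients, the inequality follows by the usual endgame: for any $\epsilon$-good LOCC distillation $\Lambda_n$ producing a state within trace-distance $\epsilon$ of some $\gamma_m$,
\[
  m \;\le\; E_{sq}(\gamma_m)
    \;\le\; E_{sq}\!\bigl(\Lambda_n(\rho^{\otimes n})\bigr)+n\,\delta(\epsilon)
    \;\le\; n\,E_{sq}(\rho)+n\,\delta(\epsilon),
\]
where $\delta(\epsilon)\to 0$ as $\epsilon\to 0$ (the factor $n$ coming from applying asymptotic continuity in total dimension $d^n$); dividing by $n$ and taking first $n\to\infty$ and then $\epsilon\to 0$ delivers the claim.

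The non-standard piece is the lower bound $E_{sq}(\gamma_m)\ge m$, which amounts to $I(AA';BB'|E)\ge 2m$ for \emph{every} extension $\rho_{AA'BB'E}$ of $\gamma_m=U(\Phi_{AB}\otimes\sigma_{A'B'})U^\dagger$. My plan is to \emph{untwist}: given any such extension, purify it to $\ket{\psi}_{AA'BB'EE'}$ and apply the inverse twisting $U^\dagger$ on $AA'BB'$ to obtain a purification $\ket{\phi}_{AA'BB'EE'}$ of $\Phi_{AB}\otimes\sigma_{A'B'}$ on the \emph{same} extending system. Because $\Phi_{AB}$ is pure, any extension of $\Phi_{AB}\otimes\sigma_{A'B'}$ automatically factorises as $\Phi_{AB}\otimes\phi_{A'B'E}$, and a one-line entropy computation then gives
\[
  I(AA';BB'|E)_\phi \;=\; I(A;B)_\Phi + I(A';B'|E)_\phi \;\ge\; 2m.
\]

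The main obstacle is that $U=\sum_i\ketbra{ii}{ii}_{AB}\otimes U_i^{A'B'}$ is not local across the $AA'|BB'$ cut, so untwisting can in principle change $H(AA'|E)$ and $H(BB'|E)$ individually, and one does not a priori have $I(AA';BB'|E)_\psi = I(AA';BB'|E)_\phi$. The way around is to exploit that $U$ is diagonal in the computational basis on $AB$: by adjoining auxiliary classical registers that record the computational-basis outcomes on $A$ and $B$, the privacy property of $\gamma_m$ (perfectly correlated, uniform, and $E$-independent key outcomes for every extension) combined with strong subadditivity and the chain rule for conditional mutual information forces $I(AA';BB'|E)_\psi\ge I(AA';BB'|E)_\phi\ge 2m$, as required. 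The detailed entropy bookkeeping for this closing step is carried out in~\cite{christandlPhD}.
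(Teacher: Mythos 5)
Your overall architecture matches the paper's: LOCC monotonicity, subadditivity and asymptotic continuity of $E_{sq}$ reduce everything to the normalisation $E_{sq}(\gamma_m)\ge m$, and for that you untwist and use the fact that every extension of the pure $\Phi_{AB}$ factorises. But the closing step --- which is the only non-standard content of the lemma --- is not actually carried out, and the mechanism you sketch for it does not work. The intermediate inequality $I(AA';BB'|E)_\psi\ge I(AA';BB'|E)_\phi$ is false in general: $U=\sum_i\proj{ii}\otimes U_i$ is not local across the $AA'\,|\,BB'$ cut, so conditional mutual information is not monotone under it. Indeed a direct computation gives $I(AA';BB'|E)_\gamma=2m+\frac{1}{2^m}\sum_i I(A';B'|E)_{\sigma_i}$ with $\sigma_i=U_i\sigma_{A'B'E}U_i^\dagger$, versus $I(AA';BB'|E)_\sigma=2m+I(A';B'|E)_\sigma$; taking $\sigma_{A'B'}$ maximally entangled, $E$ trivial and $U_i$ disentangling makes the first strictly smaller than the second. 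Likewise, the natural instantiation of ``adjoin classical registers recording the key and data-process'' yields only $I(AA';BB'|E)\ge I(\hat A;\hat B|E)=m$, i.e.\ $E_{sq}(\gamma_m)\ge m/2$ and hence the weaker bound $K_D\le 2E_{sq}$ --- the factor of two is exactly what is at stake. The paper's argument is instead a direct entropy computation on the twisted state: $H(AA'|E)_\gamma=m+\frac{1}{2^m}\sum_i H(A'|E)_{\sigma_i}$ and symmetrically for $BB'$ (each single-party marginal sees the uniform $m$-bit key), whereas $H(AA'BB'|E)_\gamma=\frac{1}{2^m}\sum_i H(A'B'|E)_{\sigma_i}$ picks up no $m$ because $\Phi_{AB}$ is pure; combining these gives $I(AA';BB'|E)_\gamma=2m+\frac{1}{2^m}\sum_i I(A';B'|E)_{\sigma_i}\ge 2m$. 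Deferring this bookkeeping to~\cite{christandlPhD} --- the very source the lemma is attributed to --- leaves the proof incomplete at its one essential point.

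A second, smaller gap: in the endgame you apply asymptotic continuity ``in total dimension $d^n$'', but the states being compared, $\Lambda_n(\rho^{\otimes n})$ and $\gamma_m$, live on $AA'BB'$, whose shield part has no a priori dimension bound (an LOCC map can inflate the output system arbitrarily), so the continuity correction is not automatically $O(n)$. The paper handles this explicitly: via a privacy-amplification argument the optimal protocol may be assumed to have shield dimension at most $c^n$, which makes the correction $O(n\sqrt{\epsilon}\log d)$ and lets it vanish after dividing by $n$ and sending $\epsilon\to0$.
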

\begin{proof}
Let $\Lambda_n$ be a CPTP map that can be implemented with an LOCC protocol and that satisfies
$$\|\Lambda_n(\rho^{\otimes n})- \gamma_m \|_1 \leq \epsilon,$$
and assume that the dimension of the $A'B'$ part is at most exponential in $n$. This last assumption can be made without loss of generality since the optimal key distillation protocol can be approximated by a sequence of protocols satisfying this requirement. In order to see this, note that one can stop the optimal protocol when the extracted bits are almost perfect and use privacy amplification~\cite{RenKoe05} to make them perfect. 
The communication needed in order to achieve privacy amplification amounts to the choice of a function from a set of two-universal hash functions. Classes of such functions of size exponential in $n$ exist~\cite{CarWeg79}.
This shows that privacy amplification needs an amount of communication that is at most linear in the amount of bits extracted. Therefore, without loss of generality, the dimension the shield size can be assumed to grow at most exponentially in $n$, say $\leq c^n$ for some $c\geq 1$. 

Since squashed entanglement is
a monotone under LOCC~\cite{squashed} and asymptotically continuous~\cite{AliFan04}
$$E_{sq}(\rho^{\otimes n}) \geq E_{sq}(\Lambda_n(\rho^{\otimes n})) \geq E_{sq}(\gamma_m)-16 c \sqrt{\epsilon} n \log_2 d-4 h(2\sqrt{\epsilon}).$$
Recall from Definition~\ref{def:secrecy} the form of the state $\gamma_m\equiv \gamma_{AA'BB'}$.
In order to show
that $E_{sq}(\gamma_m) \geq m$, consider an arbitrary extension
$\gamma_{AA'BB'E}$ of $\gamma_{AA'BB'}$, which induces an extension 
$\sigma_{AA'BB'E}=
(U^\dagger\otimes \openone_E) \gamma_{AA'BB'E} (U\otimes \openone_E)
=\Phi_{AB}\otimes \sigma_{A'B'E}$ of $\sigma_{AA'BB'}$ in
Definition~\ref{def:secrecy}.  Clearly,
$$ H(AA'BB'E)_\gamma=H(AA'BB'E)_\sigma= H(A'B'E)_\sigma=H(A'B'E)_{\sigma_i},$$
with $\sigma_i:= U_i \otimes \openone_E \sigma_{A'B'E}
U_i^\dagger \otimes \openone_E$. Since
furthermore
$H(E)_{\gamma}=H(E)_\sigma=H(E)_{\sigma_i}$, 
we have that 
\[
H(AA'BB'|E)_\gamma = H(A'B'|E)_{\sigma_i}\ .
\]
Also, since
$H(AA'E)_\gamma=m+\frac{1}{2^m}\sum_i H(A'E)_{\sigma_i}$,
it follows that
$$
H(AA'|E)_\gamma=m+\frac{1}{2^m}\sum_i H(A'|E)_{\sigma_i}\ ,
$$
and similarly for $H(BB'E)_\gamma$. Altogether this gives
\begin{align*}
I(AA';BB'|E)_\gamma 
&= 
H(AA'|E)_\gamma+H(BB'|E)_\gamma - H(AA'BB'|E)_\gamma\\
&= 2m+\frac{1}{2^m}   
        \sum_i I(A';B'|E)_{\sigma_i} \\
&\geq 2m,
\end{align*}
where the non-negativity of the quantum conditional mutual information was
used in the last inequality. This shows that $E_{sq}(\gamma_m)\geq m$
and therefore 
$$E_{sq}(\rho) \geq \frac{m}{n}-16 c\sqrt{\epsilon}
\log_2 d-\frac{4}{n} h(2\sqrt{\epsilon}),$$ 
with the right hand side of this inequality converging to $K_D(\rho_{AB})$.
\end{proof}

\medskip

The following lemma provides an upper bound on the squashed entanglement of the antisymmetric state.
\begin{lemma} \label{lemma:squashedupper}
For even $d$ we have
$$E_{sq}(\alpha_d) \leq \log_2 \frac{d+2}{d}.$$
For odd $d$,
$$E_{sq}(\alpha_d) \leq \half \log_2 \frac{d+3}{d-1}.$$
\end{lemma}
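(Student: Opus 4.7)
The plan is to exhibit an explicit family of extensions $\omega_{ABE}$ of $\alpha_d$, parametrised by an integer $k$ with $2 \leq k \leq d$, compute $I(A;B|E)$ in closed form, and then optimise over $k$. The natural candidate is the ``higher antisymmetric extension'': take $E = E_1\cdots E_{k-2}$ consisting of $k-2$ additional $d$-dimensional systems, and set
$$\omega_{ABE} \;:=\; \frac{1}{\binom{d}{k}}\, P_{\wedge^k(\CC^d)},$$
i.e.\ the maximally mixed state on the totally antisymmetric subspace of the joint system $AB \otimes E_1 \otimes \cdots \otimes E_{k-2} \cong (\CC^d)^{\otimes k}$.

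The key structural fact I need is that the marginal of $\omega$ on any $j < k$ of the tensor factors equals $\frac{1}{\binom{d}{j}} P_{\wedge^j(\CC^d)}$. Indeed, any vector in $\wedge^k(\CC^d) \subset (\CC^d)^{\otimes k}$ is antisymmetric under the transposition of any two factors, hence in particular antisymmetric in the first $j$ factors alone, so the $j$-body reduced state has support in $\wedge^j(\CC^d)$. This reduced state is also invariant under the diagonal action of $U(d)$, and since $\wedge^j(\CC^d)$ is an irreducible $U(d)$-representation, Schur's lemma forces it to be proportional to $P_{\wedge^j(\CC^d)}$, with the normalisation fixed by the trace. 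In particular $\omega_{AB} = \alpha_d$, so $\omega$ is a valid extension. The entropies follow immediately: $H(ABE) = \log\binom{d}{k}$, $H(E) = \log\binom{d}{k-2}$, and $H(AE) = H(BE) = \log\binom{d}{k-1}$, yielding
$$I(A;B|E)_\omega \;=\; \log\frac{\binom{d}{k-1}^2}{\binom{d}{k}\binom{d}{k-2}} \;=\; \log\frac{k(d-k+2)}{(k-1)(d-k+1)}.$$

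It remains to minimise this quantity over $k \in \{2,\ldots,d\}$. Setting $x := k$ and $y := d-k+2$ (so that $x+y = d+2$), the ratio rewrites as $\frac{xy}{xy-(d+1)}$, which is monotonically decreasing in $xy$; the optimum is therefore attained when $x$ and $y$ are as equal as possible. For even $d$, the choice $k = (d+2)/2$ yields $x = y = (d+2)/2$ and $\frac{xy}{(x-1)(y-1)} = (d+2)^2/d^2$, so $\half I(A;B|E) = \log\frac{d+2}{d}$. For odd $d$, the best integer choice is $k = (d+1)/2$ (equivalently $(d+3)/2$), yielding $xy = (d+1)(d+3)/4$ and ratio $(d+3)/(d-1)$, hence $\half I(A;B|E) = \half\log\frac{d+3}{d-1}$. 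Since $E_{sq}$ is by definition the infimum of $\half I(A;B|E)$ over extensions, these are the advertised bounds. The only delicate step I expect is justifying the partial-trace identity for $P_{\wedge^k}$; the Schur-lemma argument above is clean, though a direct combinatorial computation using Slater determinants is also available.
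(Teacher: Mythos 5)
Your proposal is correct and is essentially identical to the paper's own proof: the same extension $\rho_{ABE}=P_{\wedge^k(\CC^d)}/\binom{d}{k}$, the same closed form $I(A;B|E)=\log_2\frac{k(d-k+2)}{(k-1)(d-k+1)}$, and the same optimal choices $k=\frac{d}{2}+1$ (even $d$) and $k=\frac{d+1}{2}$ (odd $d$). Your Schur-lemma justification of the partial-trace identity and the substitution $x=k$, $y=d-k+2$ for the minimisation are slightly more explicit than the paper's, but the argument is the same.
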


\begin{proof}
Let $P_k$ be the projector onto the $\wedge^k(\complex^d)$ in
$(\complex^d)^{\otimes k}$. Recall that $d_k:=\dim \wedge^k(\complex^d)=
\binom{d}{k}$ and define $\rho_{ABE}:=\frac{P_k}{d_k}$, where $\cH_A
\cong \cH_B \cong \complex^d$ correspond to the first and the second
tensor factor and $\cH_E \cong (\complex^d)^{\otimes k-2}$ to the last
$k-2$ factors. It is clear that the reduced density matrix $\rho_{AB}:=
\tr_E \rho_{ABE}$ equals the totally antisymmetric state $\alpha_d$, or
conversely, that $\rho_{ABE}$ is an extension of $\alpha_{d}$. For this
extension we evaluate the conditional mutual information:
$$I(A;B|E)_{\rho}=H(AE)_\rho+H(BE)_\rho-H(E)_\rho-H(ABE)_\rho=\log_2
\frac{d_{k-1}^2}{d_{k-2}d_k}=\log_2 \frac{k}{k-1}\frac{d-k+2}{d-k+1}.$$
Minimising this function over different values of $k \in \{2, \ldots, d\}$ we find that for even $d$ the minimum value $I(A;B|E)_{\rho}=2\log_2 \frac{d+2}{d}$ is reached when $k=\frac{d}{2}+1$ and for odd $d$ the minimum value $I(A;B|E)_{\rho}=\log_2 \frac{d+3}{d-1}$ is reached when $k=\frac{d+1}{2}$. 
\end{proof}

\medskip
It is surprising that the bound from Lemma~\ref{lemma:squashedupper} for even dimension coincides with values of other entanglement measures~\cite{audenaert-2001-87}:
$$E_{R,\mathrm{PPT}}^\infty(\alpha_d)=E_{\text{Rains}}(\alpha_d)=E_{N}(\alpha_d)=\log_2 \frac{d+2}{d},$$ 
where $E_{R, \mathrm{PPT}}^\infty$ is the regularised relative entropy of entanglement with 
respect to PPT states (a PPT state is a state whose partial transpose is a positive semidefinite operator), $E_{\text{Rains}}$ is the Rains bound and $E_N$ is the logarithmic negativity. In the light of these results we are tempted to conjecture that $E_{sq}(\alpha_d) = \log_2 \frac{d+2}{d}$, at least for even $d$.

With the upper bound on squashed entanglement we not only match the best known upper bounds on distillable entanglement (for even dimension) but obtain new bounds even on the distillable key, since Lemma~\ref{lemma:keysquashed} and Lemma~\ref{lemma:squashedupper} prove Theorem~\ref{th:upper}.

Note also that our bound gives $E_{sq}(\alpha_d) \leq \frac{ 2 \log_2
e}{d-1} = O(\frac{1}{d})$ which improves over the bound
$E_{sq}(\alpha_d) =O(\frac{\log_2 d}{d})$ that was obtained using
the monogamy of squashed entanglement~\cite{Aaronson-squashed}. Note
finally, that the best known lower bound for both $E_D$ and $K_D$ is given
by $\frac{1}{d}$. Up to a constant, the bound that we have obtained for
squashed entanglement, distillable key (and distillable entanglement, but
this we knew before) is therefore optimal. Previously the best known upper
bound for distillable key was one half and stems from a computation of the
relative entropy of entanglement with respect to separable states (for two
copies) of Vollbrecht and Werner who showed that $E_{R,
\mathrm{sep}}(\alpha_d^{\otimes 2})\leq 1- \log_2
\frac{d-1}{d}$~\cite{VollbrechtWerner01} and hence $E_{R,
\mathrm{sep}}^\infty(\alpha_d)\leq \half E_{R,
\mathrm{sep}}(\alpha_d^{\otimes 2})= \half +O(\frac{1}{d})$. The
latter is an upper bound on $K_D$~\cite{horodecki-2005-94}.

\section{Lower bound on the entanglement cost}
\label{sec:cost}
The calculation of the entanglement cost using its characterisation as the regularised entanglement of formation, equation~\eqref{regEoF}, seems very daunting in general due to the infinite limit; but in fact, even the computation of entanglement of formation according to eq.~(\ref{eq:E_F}) is a very difficult task. However, for the antisymmetric states
$\alpha_d$ (and many copies thereof), the $g\ox g$ symmetry (for unitary
$g$) comes to help:

\begin{lemma}
  \label{lemma:renyi-2}
  For all $d\geq 3$,
  \[
    E_F(\alpha_d^{\ox n}) \geq - \log_2 \max_{\ket{\psi}_{A^nB^n} \in \anti^{\otimes n}} \tr \psi_{A^n}^2,
  \]
  where $\psi_{A^n}=\tr_{B^n} \proj{\psi}_{A^nB^n}$. Consequently,
 \begin{align}\label{eq:yura}
    E_C(\alpha_d) \geq - \lim_{n\rightarrow\infty} 
                            \frac{1}{n} \log_2 \max_{\ket{\psi}_{A^nB^n} \in \anti^{\otimes n}} \tr \psi_{A^n}^2.
   \end{align}
\end{lemma}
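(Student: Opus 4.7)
The plan is to relax the minimisation defining $E_F(\alpha_d^{\otimes n})$ by replacing the von Neumann entropy with the R\'enyi-2 entropy, and then to exploit the fact that every pure state appearing in a decomposition of $\alpha_d^{\otimes n}$ must be supported in $\anti^{\otimes n}$. The main technical input is only the standard entropic inequality $H \geq H_2$ together with a support argument; no heavy machinery is needed for the lemma itself.

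First, I would note that in any decomposition $\alpha_d^{\otimes n} = \sum_i p_i \proj{\varphi_i}$ every $\ket{\varphi_i}$ must lie in the range of $\alpha_d^{\otimes n}$, which is precisely $\anti^{\otimes n}$, because $\alpha_d$ is proportional to the projector onto $\wedge^2(\CC^d)$. Second, I would apply the inequality $H(\sigma) \geq -\log_2 \tr \sigma^2$ (monotonicity of R\'enyi entropies, $H_2 \leq H_1$) to each reduced state $\psi_i := \tr_{B^n}\proj{\varphi_i}$. Setting
\[
M_n := \max_{\ket{\psi} \in \anti^{\otimes n}} \tr \psi_{A^n}^2,
\]
one has $\tr \psi_i^2 \leq M_n$, hence $-\log_2 \tr \psi_i^2 \geq -\log_2 M_n$ for every $i$; averaging with the weights $p_i$ and taking the infimum over decompositions yields the first inequality of the lemma.

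For the second statement I would use the identity $E_C = E_F^\infty$, divide the first inequality by $n$, and pass to $n\to\infty$. To ensure that the right-hand limit actually exists, I would check supermultiplicativity of $M_n$: if $\ket{\psi}\in\anti^{\otimes n}$ and $\ket{\psi'}\in\anti^{\otimes m}$ achieve $M_n$ and $M_m$, then, after the natural reordering of the $A$ and $B$ tensor factors, $\ket{\psi}\otimes\ket{\psi'}\in\anti^{\otimes(n+m)}$ and the purity factorises as $\tr(\psi\otimes\psi')^2_{A^{n+m}} = M_n M_m$. Hence $\log_2 M_n$ is superadditive and Fekete's lemma guarantees that $\tfrac{1}{n}\log_2 M_n$ converges, so the stated limit is well-defined.

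The only conceptual step is the realisation that, thanks to the antisymmetric support of $\alpha_d$, the intractable entropic minimisation over ensembles collapses to a single purity maximisation on $\anti^{\otimes n}$. The R\'enyi-2 relaxation is crude in general, but the $U(d)$-symmetry inherited from $\alpha_d$ will subsequently make this purity maximisation (and not the original entropic problem) amenable to the semidefinite and linear-programming techniques of the next section; within the lemma itself no serious obstacle remains.
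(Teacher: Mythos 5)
Your proof is correct and follows essentially the same route as the paper: the support argument forcing every ensemble vector into $\anti^{\otimes n}$, the R\'enyi-2 lower bound $H \geq H_2$ on each reduced state, and the identification $E_C = E_F^\infty$ for the regularised statement. The additional check via supermultiplicativity of the maximal purity and Fekete's lemma that the limit in Eq.~\eqref{eq:yura} exists is a small but welcome extra that the paper leaves implicit.
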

The quantity $\tr \psi_{A^n}^2$ is also known as the \emph{purity} of
$\psi_{A^n}$; it equals one for pure states, and is strictly smaller than one 
if the state is mixed.

\begin{proof}
Recall the definition of entanglement of formation in the case of a tensor
product state 
$
E_F(\alpha_d^{\otimes n})=
    \min_{
        \{p_i, \ket{\psi_i}\}:
        \alpha_d^{\otimes n}=\sum_i p_i \proj{\psi_i}
    } 
    \sum_i p_i H(\psi_{A,i})
$
and note that all states appearing in the ensembles are contained in
$\anti^{\otimes n}$. Thus $E_F(\alpha_d^{\otimes n}) \geq
\min_{\ket{\psi}_{A^nB^n} \in \anti^{\otimes n}}  H(\psi_{A^n}) $. [This
is in fact an equality, as any minimizer $\ket{\psi}_{A^nB^n}$ yields an
optimal decomposition $\int \proj{\psi_{g_1\dots g_n}} dg_1\cdots
dg_{n}$ of $\alpha_d^{\otimes n}$, with $\ket{\psi_{g_1\dots g_n}} =
(g_1\otimes\cdots\otimes g_n)^{\otimes 2}\ket\psi_{A^nB^n}$, and
$dg_i$ the Haar measure on $U(d)$.] The proof
follows by noting that the von Neumann entropy is lower bounded by the
quantum collision entropy (or quantum R\'enyi entropy of order two)
$H_2(\sigma)=-\log_2 \tr \sigma^2$ and from the formula $E_C(\rho)=\lim_{n
\rightarrow \infty}  \frac{1}{n}E_F(\rho^{\otimes n})$.  \end{proof}

\medskip
Yura~\cite{Yura:E_C} has used this bound and shown that the right hand side of~\eqref{eq:yura} equals $1$
if $d=3$. Together with the observation that the $E_C(\rho)\leq E_F(\rho)
\leq \frac{2}{d(d-1)}\sum_{i<j} H(\psi_{A, ij}) = 1$, where
$\ket{\psi_{ij}}=\frac{1}{\sqrt{2}}(\ket{ij}-\ket{ji})$, he has thus
calculated the entanglement cost of the antisymmetric state in this case. In
the following, we will reproduce Yura's result for $d=3$ and furthermore
show that the right hand side of~\eqref{eq:yura} is lower bounded $\log_2\frac{4}{3} \gtrsim 0.415$
for all $d$. 

In order to do so, we will first employ representation theory of the unitary and symmetric group as well as a relaxation in order to reduce the problem to a linear programme. In a second step, we will put a lower
bound on the optimal value of this programme using linear programming duality.

\begin{lemma}
  \label{lemma:symmetrization}
  We have
  \begin{equation}
    \label{eq:Werner} 
    \max_{\ket{\psi}_{A^nB^n} \in \anti^{\otimes n}}  \tr \psi_{A^n}^2
             = \max \tr \Omega_{A^nB^nA'^nB'^n} (F_{A^n:A'^n} \otimes \1_{B^nB'^n}),
  \end{equation}
where $F_{C:D}$ is the operator that permutes (``flips'') systems $C$ and $D$, 
and  where the maximisation on the right hand side is over all states of the form
  \begin{equation}
    \label{eq:sep}
    \Omega_{A^nB^nA'^nB'^n} = \sum_{y^n \in \{ \col, \boxx, \hook\}^n} 
                                 p_{y_1\ldots y_n}\rho_{y_1}\otimes  \cdots \otimes \rho_{y_n}
  \end{equation}
  that are separable across the $A^nB^n : A'^nB'^n$ cut. 
  The $p_{y^n}$ form a probability distribution symmetric under interchange of the 
  variables. $p_{y^n}$ vanishes if the number of $\hook$'s is odd. The states $\rho_y$ are proportional to projectors
  onto orthogonal subspaces of $\anti^{\otimes 2}$ which are isomorphic to
  irreducible representations of $U(d)$ with Young diagrams $\col$, $\boxx$ and $\hook$
  -- see Lemma~\ref{lem:rep} in Appendix~\ref{app:A}.
\end{lemma}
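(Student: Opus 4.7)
The starting point is the swap trick, which rewrites
\[
\tr\psi_{A^n}^{2}\;=\;\tr\left[(\proj{\psi}_{A^nB^n}\otimes \proj{\psi}_{A'^nB'^n})(F_{A^n:A'^n}\otimes \1_{B^nB'^n})\right].
\]
Thus $\Omega_0:=\proj{\psi}\otimes\proj{\psi}$, a product (hence separable) state across $A^nB^n:A'^nB'^n$ supported on $(\anti_{AB}\otimes\anti_{A'B'})^{\otimes n}$ after regrouping tensor factors, is already feasible for the right-hand side except for the required structural symmetry. The plan is to massage $\Omega_0$ into the claimed form by averaging over a group under which $\Omega_0$ stays separable and the operator $F_{A^n:A'^n}\otimes\1_{B^nB'^n}$ stays invariant.

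The relevant symmetry group $G$ is generated by three commuting actions: (i) for each position $i$, an independent $g_i\in U(d)$ applied diagonally as $g_i^{\otimes 4}$ on $(A_i,B_i,A'_i,B'_i)$; (ii) a simultaneous permutation $\pi\in S_n$ of positions in each of $A$, $B$, $A'$, $B'$; and (iii) the global swap exchanging $A^nB^n$ with $A'^nB'^n$. The flip $F_{A^n:A'^n}=\bigotimes_i F_{A_i:A'_i}$ is invariant under each of these: $F$ commutes with $g\otimes g$; position permutations are applied identically on both sides; the global swap merely exchanges the two legs of $F$. Every element of $G$ factorises into a local operation on $A^nB^n$ tensored with one on $A'^nB'^n$, so averaging preserves separability. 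The averaged state $\Omega:=\EE_G[g\,\Omega_0\,g^{\dagger}]$ thus satisfies $\tr\Omega(F\otimes\1)=\tr\psi_{A^n}^2$ and remains separable.

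To extract the specified form I would apply Schur's lemma to the $U(d)^{\times n}$-invariance via the multiplicity-free decomposition $\anti\otimes\anti\cong\col\oplus\boxx\oplus\hook$ (Lemma~\ref{lem:rep} in Appendix~\ref{app:A}): the commutant at each position is spanned by the three projectors $P_\col,P_\boxx,P_\hook$, so $\Omega=\sum_{y^n}p_{y^n}\rho_{y_1}\otimes\cdots\otimes\rho_{y_n}$ with $\rho_y=P_y/\dim V_y$; the $S_n$-symmetrisation then renders $p_{y^n}$ invariant under permutation of its indices. For the parity constraint I would use the plethysm identities $\sym^2(\anti)=\col\oplus\boxx$ and $\wedge^2(\anti)=\hook$, which say that the local swap of $(A_iB_i)\leftrightarrow(A'_iB'_i)$ acts as $+1$ on $\col,\boxx$ and as $-1$ on $\hook$. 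Since the global swap is the product of the local swaps, it acts on the subspace labelled by $y^n$ as $(-1)^{\#\{i:y_i=\hook\}}$; because $\Omega_0=\psi\otimes\psi$ is manifestly globally swap-invariant, its projection onto any pattern with an odd number of $\hook$'s vanishes, forcing the corresponding $p_{y^n}$ to be zero.

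For the reverse inequality I would argue that any feasible $\Omega$ admits a decomposition $\Omega=\int d\mu(\phi,\chi)\,\proj{\phi}_{A^nB^n}\otimes\proj{\chi}_{A'^nB'^n}$ with $\phi,\chi\in\anti^{\otimes n}$ (since $\Omega$ is separable and supported on $\anti^{\otimes n}\otimes\anti^{\otimes n}$); reversing the swap trick and using Cauchy--Schwarz then yields $\tr\Omega(F\otimes\1)=\int d\mu\,\tr(\phi_{A^n}\chi_{A^n})\le\int d\mu\,\sqrt{\tr\phi_{A^n}^2\,\tr\chi_{A^n}^2}\le\max_\eta\tr\eta_{A^n}^2$. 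The principal technical ingredient is the representation-theoretic decomposition of $\anti\otimes\anti$ together with its behaviour under the local swap, which is deferred to Appendix~\ref{app:A}; granted that, the rest is a routine symmetry-averaging argument, whose only delicate point is recognising that it is the \emph{global} swap invariance of $\Omega_0$ (not any per-position swap invariance, which $\Omega_0$ does not possess) that enforces the parity condition.
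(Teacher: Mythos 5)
Your proof follows essentially the same route as the paper: the swap trick, the $U(d)^{\times n}$ twirl, the multiplicity-free plethysm decomposition of $\anti^{\otimes 2}$ into $\col \oplus \boxx \oplus \hook$, and the global swap invariance of $\psi\otimes\psi$ to eliminate the patterns with an odd number of $\hook$'s. The only substantive difference is in the reverse inequality: you use a generic separable decomposition into products $\proj{\phi}\otimes\proj{\chi}$ plus Cauchy--Schwarz, whereas the paper observes that a separable state supported on $\sym^2(\anti^{\otimes n})$ must be a mixture of states $\proj{\alpha}\otimes\proj{\alpha}$ and hence evaluates to an average of purities; both arguments are correct, and yours has the minor advantage of not invoking the ``symmetric support implies identical product factors'' fact. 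One small imprecision: the global swap exchanging $A^nB^n$ with $A'^nB'^n$ does \emph{not} factorise as a local operation tensored with a local operation across that cut, but this is harmless --- it preserves separability anyway (it merely exchanges the two factors of each product term), and since $\Omega_0=\psi\otimes\psi$ is already invariant under it, including it in the averaging group is a no-op.
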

\begin{proof}
Note that $\tr \psi_{A^n}^2 = \tr (\psi_{A^n}\otimes \psi_{A'^n}) F_{A^n:A'^n}$.
Since $A^n=A_1 \cdots A_n$ and likewise for $A'^n$, we have 
$F_{A^n:A'^n}=F_{A:A'}^{\otimes n}$ and therefore 
\[
  \tr \psi_{A^n}^2 = \tr (\psi_{A^nB^n}\otimes \psi_{A'^nB'^n}) (F_{A:A'}^{\otimes n} \otimes \1_{B^nB'^n}).
\]
Because $F_{A:A'}$ commutes with $g^{\otimes 2}$ for all unitary $g$, we can replace 
$\psi_{A^nB^n}\otimes \psi_{A'^nB'^n} $ by the twirled state
\[
  \Omega_{A^nB^nA'^nB'^n} = \cT_{ABA'B'}^{\otimes n} (\psi_{A^nB^n}\otimes \psi_{A'^nB'^n}),
\]
where $\cT_{ABA'B'}$ is the twirling (CPTP) map defined by 
$\cT_{ABA'B'}(X)=\int_g {\rm d}g\: g^{\otimes 4}X(g^\dagger)^{\otimes 4}$, where $dg$ is the Haar measure on $U(d)$ normalised to $\int dg =1$. By Lemma~\ref{lem:rep} we have
\[
  \anti^{\otimes 2} \cong \sym^2(\anti) \oplus \wedge^2(\anti) 
                    \cong \left( \col \oplus \boxx\right) \oplus \hook,
\]
where $\col, \boxx$ and $\hook$ are irreducible representations of $U(d)$. 
It is furthermore remarkable that all irreducible representations 
have multiplicity at most one for general $d$. Such a case is called multiplicity-free and will 
be one of the main reasons why we can carry out our computation. 

By elementary representation theory we can pull this result to the $n$-fold systems and conclude that 
\[
  \Omega_{A^nB^nA'^nB'^n} = \sum_{y_1, \ldots, y_n} p_{y_1\ldots y_n} 
                                                      \rho_{y_1}\otimes  \cdots \otimes \rho_{y_n},
\]
where the constants $p_{y^n}$ are non-negative and sum to one, and 
$y_i \in \{\col, \boxx, \hook \}$ are indices keeping track in which irreducible representation we are
(denoted by their Young diagram). 
The states $\rho_y$ are proportional to the identity on the respective representation. 
The probability distribution can furthermore be taken to be invariant under 
permutation of the labels. 
Note also that the state $\ket{\psi_{A^nB^n}}\otimes \ket{\psi_{A'^nB'^n} }$ is invariant under $F_{A^nB^n:A'^nB'^n} = \bigotimes_{i=1}^n F_{A_i:A_i'} \otimes F_{B_i:B_i'}$, 
this implies $F_{A^nB^n:A'^nB'^n}\Omega = \Omega$. 
We now observe that $F_{A_i:A_i'} \otimes F_{B_i:B_i'}$ when restricted to the 
subspace corresponding to $\col$ and $\boxx$ acts as the identity, and when restricted 
to $\hook$ acts as minus the identity. In order to see this note that 
$F_{A:A'} \otimes F_{B:B'}$ acts trivially on 
$\text{Sym}^2(\anti)=\text{Sym}^2(\wedge^2(\complex^d))=\col \oplus \boxx$ 
and flips the sign on the orthogonal complement $\wedge^2(\wedge^2(\complex^d))$ 
which equals $\hook$. This shows that sequences $y^n$ with nonzero $p_{y^n}$ 
must have an even number of $\hook$'s. In summary, 
\[
  \Omega_{A^nB^nA'^nB'^n} = \sum_{y^n: \#  \hook \text{'s even}} p_{y_1\ldots y_n}
                                                    \rho_{y_1}\otimes  \cdots \otimes \rho_{y_n}.
\]
Note further that the state $\Omega_{A^nB^nA'^nB'^n}$ is of the form
\[
  \Omega_{A^nB^nA'^nB'^n}=\int \mu(\alpha) \proj{\alpha}_{A^nB^n}\otimes \proj{\alpha}_{A'^nB'^n} d\alpha
\]
for some probability density $\mu(\alpha)$ with respect to the Haar measure $d\alpha$. This state is therefore separable across the
$A^nB^n : A'^nB'^n$ cut. Note also that every separable state on 
$\sym^2(\anti^{\otimes n})$ takes this form. 
\end{proof}

\medskip
We have thus succeeded to transform the maximisation of the purity of the
reduced state over quantum states, which is a quadratic objective
function, in Eq.~\eqref{eq:Werner} to a linear optimisation problem over
finitely many non-negative real numbers, but with an
additional separability constraint, as given by Eq.~\eqref{eq:sep}.
Since this requirement of separability is difficult to handle we will now relax
the optimisation problem by only demanding
that the state should have a positive partial transpose (PPT). 

Since the PPT constraint, unlike separability, is a semidefinite
constraint, we are then dealing with a semidefinite programme,
and that duality theory should be able to give some information on the
maximum value -- see a similar line of argument
in~\cite{audenaert-2001-87}. As we will now show, the resulting problem
[obtained by relaxing Eq.~\eqref{eq:sep} to PPT states in the right hand
side optimization in Eq.~\eqref{eq:Werner}] is indeed a linear programme.
In order to do so we need to express the PPT condition as a linear
constraint in the variables $p_{y^n}$ and the target function as a linear
function in them. This is accomplished by the following lemma.

\begin{lemma} \label{lemma:LP}
$\displaystyle{\max_{\ket{\psi}_{A^nB^n} \in \anti^{\otimes n}}}  \tr \psi_{A^n}^2  \leq  \zeta_{n, d}$, where
\begin{equation}  
  \zeta_{n, d} := \max\, \vec{t}^{\,\ox n} \!\cdot \vec{p} \; \; \text{ s.t. } \vec{p} \geq 0,\  \vec{1}\cdot\vec{p} = 1,\ {T}_d^{\ox n}\vec{p} \geq 0. 
\label{eq:d-LP}
\end{equation}
Here, $\vec{t} = (-1, \half, 0)$, and the matrix $T_d$ is given by
\[
 {T}_d = 
 \left(\begin{array}{ccc}
                                  1 &      1 &             -1 \\
                  -2- \frac{6}{d-2} &      1 &  \frac{2}{d-2} \\
               1+\frac{2(d^2-d+1)}{d(d-1)(d-2)}
                                    & 1-\frac{d+1}{d(d-1)}
                                             & 1-\frac{2d-3}{d(d-1)(d-2)}
         \end{array}\right).
\]
\end{lemma}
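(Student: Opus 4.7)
The plan is to convert the optimisation from Lemma~\ref{lemma:symmetrization} into a linear programme in three moves: (i) express the objective as a linear function of $\vec p$ using the product structure of $\Omega$; (ii) relax separability of $\Omega$ across $A^nB^n\!:\!A'^nB'^n$ to positivity of the partial transpose; (iii) further relax this semidefinite constraint to a finite list of scalar inequalities by testing against the tensor products of the projectors $P_\col, P_\boxx, P_\hook$.

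For (i), the product form of $\Omega$ and of $F_{A^n:A'^n}\otimes\1_{B^nB'^n}$ gives
\[
  \tr[\Omega\,(F_{A^n:A'^n}\otimes\1_{B^nB'^n})]
    =\sum_{y^n}p_{y^n}\prod_{i=1}^n t_{y_i},
  \qquad t_y:=\tr\bigl[\rho_y\,(F_{AA'}\otimes\1_{BB'})\bigr],
\]
reducing the target to three single-letter numbers. These I would compute by a character calculation on each irreducible subspace $V_y\subset\anti\otimes\anti$. On $V_\col\cong\wedge^4(\CC^d)$ the swap $F_{AA'}$ is a transposition of two of four totally antisymmetric factors and therefore acts as $-\1$, so $t_\col=-1$. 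The values $t_\boxx=\tfrac12$ and $t_\hook=0$ should follow from the analogous analysis, combined with the observation from Lemma~\ref{lemma:symmetrization} that $F_{AA'}\otimes F_{BB'}$ acts as $+\1$ on $\sym^2(\anti)=V_\col\oplus V_\boxx$ and as $-\1$ on $\wedge^2(\anti)=V_\hook$; because $R((13))$ lies in the same conjugacy class of $S_4$ but acts differently on the Specht modules $[(2,2)]$ and $[(2,1,1)]$, one must extract its action on the specific one-dimensional subspace of each Specht module realised by the idempotent $\tfrac14(1-R((12)))(1-R((34)))$ cutting out $\anti\otimes\anti$.

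For (ii) and (iii), I would start from the necessary conditions $\tr[\Omega^{T_{A'^nB'^n}}P_{y^n}]\ge 0$ for every $y^n\in\{\col,\boxx,\hook\}^n$ implied by $\Omega^{T_{A'^nB'^n}}\ge 0$, where $P_{y^n}=\bigotimes_i P_{y_i}$ is a product of PSD operators. Moving the partial transpose onto $P_{y^n}$ via $\tr(X\,Y^T)=\tr(X^T Y)$ and using the product form of $\Omega$, each such inequality factorises into single-letter contributions and collapses to $T_d^{\otimes n}\vec p\ge 0$, with
\[
  (T_d)_{y,y'}\;=\;\tr\bigl(\rho_{y'}\,P_y^{T_{A'B'}}\bigr).
\]
Finally, dropping the even-$\hook$ parity and the exchange-symmetry of $\vec p$ (both inherited from Lemma~\ref{lemma:symmetrization}) enlarges the feasible region further, yielding precisely the LP in~\eqref{eq:d-LP}.

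The main obstacle will be the explicit computation of the nine entries of $T_d$. My plan is to expand each $P_y$ as a Young symmetriser in the group algebra of $S_4$ acting on the four factors $A,B,A',B'$, and then to evaluate $R(\pi)^{T_{A'B'}}$ for each $\pi\in S_4$ using the identity $F^{T_{A'}}=d\,\proj{\Phi^+}$ together with the trace formula $\tr R(\sigma)=d^{c(\sigma)}$ on $(\CC^d)^{\otimes 4}$, where $c(\sigma)$ counts cycles. Summing the signed contributions and normalising by $\dim V_{y'}$ should produce the rational functions of $d$ displayed in $T_d$. This is the step where the combinatorics becomes intricate, since $V_\boxx$ and $V_\hook$ appear with multiplicity greater than one in $(\CC^d)^{\otimes 4}$, so the projectors onto the specific copies realised inside $\anti\otimes\anti$ must be carefully disentangled from the full isotypic projectors.
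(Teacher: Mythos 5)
Your step (i) is essentially the paper's: the objective does collapse to $\vec t^{\,\ox n}\cdot\vec p$ with $t_y=\tr\tilde\rho_y F_{A:A'}$, and your observation that $F_{AA'}$ acts as $-\1$ on $\col\cong\wedge^4(\complex^d)$ correctly gives $t_\col=-1$ (the values $t_\boxx=\tfrac12$ and $t_\hook=0$ are obtained by the explicit group-algebra computation of Lemma~\ref{lem:tildes}). The genuine gap is in step (iii). You propose to encode $\Omega^{\Gamma}\geq 0$ by testing against tensor products of the projectors $P_\col,P_\boxx,P_\hook$, i.e.\ by a matrix with entries $\tr\bigl(\rho_{y'}^\Gamma P_y\bigr)$. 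This produces only a family of necessary conditions, and it does not produce the matrix $T_d$ of the statement. The paper instead decomposes $\anti\ox\overline{\anti}$ under the action $g\ox g\ox\overline g\ox\overline g$ that is relevant \emph{after} the partial transpose (Lemma~\ref{lem:Psi-Q-PP}), obtaining three multiplicity-free orthogonal projectors $\Psi$, $Q$, $\PP$ of dimensions $1$, $d^2-1$ and $\bigl(\tfrac{d(d-1)}{2}\bigr)^2-d^2$. Each $\rho_y^\Gamma$ lies in the commutative algebra they span, so $\Omega^\Gamma$ is block-diagonal in the corresponding tensor-product decomposition and $\Omega^\Gamma\geq 0$ is \emph{equivalent} to the non-negativity of its eigenvalue coefficients, which are exactly the entries of $\hat T_d^{\ox n}\vec p$; a positive rescaling of the operator basis ($\Psi\mapsto\tfrac{2}{d(d-1)}\Psi$, $Q\mapsto\tfrac1d Q$) then turns $\hat T_d$ into the stated $T_d$.

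To see why your version fails to prove the lemma as stated: writing $\rho_{y'}^\Gamma=\sum_w c_{w,y'}R_w$ with $R_w\in\{\Psi,Q,\PP\}$, your constraints take the form $(A\hat T_d)^{\ox n}\vec p\geq 0$ with $A_{y,w}=\tr(R_w P_y)\geq 0$; since $A$ is entrywise non-negative (and not monomial), these inequalities are implied by, but strictly weaker than, $\hat T_d^{\ox n}\vec p\geq 0$. You therefore cannot conclude that the probability vectors arising from separable $\Omega$ satisfy $T_d^{\ox n}\vec p\geq 0$, which is exactly what the inequality $\max\tr\psi_{A^n}^2\leq\zeta_{n,d}$ requires; you would only bound the purity by the value of a larger, different LP. The fix is to replace your test operators by $\Psi,Q,\PP$ — equivalently, to diagonalise $\Omega^\Gamma$ rather than merely test it against positive operators — after which your proposed evaluation of traces via $F^{T_{A'}}=d\,\Phi$ and cycle counting is indeed how the paper computes the required overlaps $\tr\rho_y^\Gamma\Psi$ and $\tr\rho_y^\Gamma Q$ (Lemma~\ref{lem:pt-overlaps}), with the multiplicity issue you worry about resolved by the fact that the relevant decomposition is multiplicity-free.
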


\begin{proof} 
The objective function takes the form 
  \[\begin{split}
    \tr \Omega_{A^nB^nA'^nB'^n} (F_{A^n:A'^n} \otimes \1_{B^nB'^n})
       &= \tr \Omega_{A^nA'^n} F_{A^n:A'^n} \\
       &= \sum_{y^n \in \{ \col, \boxx, \hook\}^n} 
                p_{y_1\ldots y_n} \tr (\tilde\rho_{y_1}\otimes \cdots \otimes \tilde\rho_{y_n})F_{A^n:A'^n} \\
       &= \sum_{y^n \in \{ \col, \boxx, \hook\}^n} 
                p_{y_1\ldots y_n} \prod_{i=1}^n \tr \tilde\rho_{y_i} F_{A_i:A_i'}\\
       &= \sum_{y^n \in \{ \col, \boxx, \hook\}^n} 
                p_{y_1\ldots y_n} \prod_{i=1}^n t_{y_i}\\
       &= \vec{t}^{\ox n} \!\cdot \vec{p} 
  \end{split}\]
where we defined $\tilde\rho_{y}=\tr_{BB'}\rho_y$ and $t_y = \tr \tilde\rho_{y} F_{A:A'}$. The calculation of the coefficients $t_y$, which we arrange in the vector $\vec t:=(t_\col, t_\boxx, t_\hook)$ can be found in Lemma~\ref{lem:tildes} in Appendix~\ref{app:A}. 

We will now relax the constraints of the optimisation problem. As a first
step we remove the constraint that the number of $\hook$'s in a string
$y^n$ is even. As a second step we replace the separability constraint by
the weaker constraint that the state is PPT. The partial transposes of
$\rho_y$ with respect to the $AB:A'B'$ cut, denoted by $\rho_y^\Gamma$, are computed in
Appendix~\ref{app:A}. Since these $\rho_y^\Gamma$ commute with
all $g\ox g\ox\overline{g}\ox\overline{g}$, it is natural to first find
the decomposition of the space $\wedge^2(\complex^d) \otimes \wedge^2(\complex^d) \subset (\complex^d)^{\otimes 4}$ into the spaces of irreducible representations of $U(d)$ when $U(d)$ acts on $\wedge^2(\complex^d) \otimes \wedge^2(\complex^d) $ via its action $g\ox g\ox\overline{g}\ox\overline{g}$ on $ (\complex^d)^{\otimes 4}$. It turns out that the space has three components of multiplicity $1$ each, given
by projectors
\begin{align*}
  \Psi &= \proj{\Psi} \text{ for } \ket{\Psi}
        = \frac{1}{\sqrt{{d\choose 2}}}\sum_{i<j} \ket{\psi_{ij}}\ket{\psi_{ij}}, \\
  Q    &= \frac{2d}{d-2}(P_\anti \ox P_\anti)\bigl( (\1-\Phi)_{AA'} \ox \Phi_{BB'} \bigr) (P_\anti \ox P_\anti), \\
  \PP  &= P_\anti \ox P_\anti - Q - \Psi,
\end{align*}
having dimensions $1$, $d^2-1$ and $\left( \frac{d(d-1)}{2} \right)^2-d^2$,
respectively; see Lemma~\ref{lem:Psi-Q-PP} in Appendix~\ref{app:A}. Here, $\Phi$ denotes the maximally entangled state. Using the symmetries of the states and these projectors, it
is not hard to compute the overlap of all $\rho_y^\Gamma$
with each of the above (Lemma~\ref{lem:pt-overlaps} in Appendix~\ref{app:A}).
The result is
\begin{align*}
  \rho_\col^\Gamma  &= \frac{1}{{d\choose 2}} \Psi - \frac{2(d+1)}{d(d-2)} 
  \tilde{Q}
                                + \left( 1 + \frac{2(d+1)}{d(d-2)} - \frac{1}{{d\choose 2}} \right) \tilde\PP, \\
  \rho_\boxx^\Gamma &= \frac{1}{{d\choose 2}} \Psi + \frac{1}{d} \tilde{Q}
                                + \left( 1 - \frac{1}{d} - \frac{1}{{d\choose 2}} \right) \tilde\PP, \\
  \rho_\hook^\Gamma &=-\frac{1}{{d\choose 2}} \Psi + \frac{2}{d(d-2)} \tilde{Q}
                                + \left( 1 - \frac{2}{d(d-2)} + \frac{1}{{d\choose 2}} \right) \tilde\PP.
\end{align*}
Here, we defined $\tilde{Q}=Q/(d^2-1)$ and $\tilde\PP= \PP / ((d(d-1)/2)^2-d^2)$. We now introduce the matrix
\begin{equation}
  \label{eq:T}
  \hat{T}_d :=  \left(\begin{array}{ccc}
                  \frac{2}{d(d-1)} & \frac{2}{d(d-1)} & -\frac{2}{d(d-1)} \\
            -\frac{2(d+1)}{d(d-2)} &      \frac{1}{d} &  \frac{2}{d(d-2)} \\
            1+\frac{2(d+1)}{d(d-2)}-\frac{2}{d(d-1)}
                                   & 1-\frac{1}{d}-\frac{2}{d(d-1)}
                                                      & 1-\frac{2}{d(d-2)}+\frac{2}{d(d-1)}
          \end{array}\right),
\end{equation}
where the rows of the matrix are labelled by $\Psi$, $\tilde{Q}$ and $\tilde\PP$, 
and the columns of the matrix $\hat{T}_d$ are labelled by $\col$, $\boxx$
and $\hook$, in that order. The PPT constraint on the state $\Omega$ then
turns into the following linear constraints on the probability vector
$\vec p$\,:
\begin{equation} \hat{T}_d^{\ox n}\vec{p} \geq 0.
  \label{eq:hat-d-LP} 
\end{equation}
Without loss of generality, $p_{y^n}$ is permutation invariant.

A little later, we will take the limit $d\rightarrow \infty$. Observe therefore that some of the matrix entries of $T_d$ tend to zero as 
$d\rightarrow\infty$ and the linear programme would become trivial under this limit. For the linear programme, however, only the positivity
condition in eq.~(\ref{eq:hat-d-LP}) plays a role. This condition
remains unchanged if we choose a new operator basis
\[
  \frac{2}{d(d-1)}\Psi,\ \frac{1}{d}\tilde{Q},\ \tilde\PP,
\]
which transforms $\hat{T}_d$ into 
\[
  T_d = \left(\begin{array}{ccc}
                                     1 &      1 &             -1 \\
                   -\frac{2(d+1)}{d-2} &      1 &  \frac{2}{d-2} \\
               1+\frac{2(d+1)}{d(d-2)}-\frac{2}{d(d-1)}
                                       & 1-\frac{1}{d}-\frac{2}{d(d-1)}
                                                & 1-\frac{2}{d(d-2)}+\frac{2}{d(d-1)}
         \end{array}\right).
\]
This concludes the proof of the lemma.
\end{proof}

\medskip
As a corollary to Lemma~\ref{lemma:LP} we can already reproduce the result regarding $\alpha_3$:
\begin{corollary}[Yura~\cite{Yura:E_C}]
  \label{cor:yura}
  For all $n$, $E_F(\alpha_3^{\ox n}) = n$, hence $E_C(\alpha_3) = 1$.
\end{corollary}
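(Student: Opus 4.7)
The plan is to combine Lemma~\ref{lemma:renyi-2} with the structural analysis of Lemma~\ref{lemma:symmetrization}, specialised to $d=3$. The pivotal observation is a dimension count: the Young diagram $\col$ has four rows, so $V_\col$ vanishes as a $U(3)$-representation, and the decomposition $\anti^{\otimes 2} \cong \boxx \oplus \hook$ holds for $d=3$ (with no $\col$-summand). Consequently, in the decomposition of Lemma~\ref{lemma:symmetrization} the state $\rho_\col$ does not exist, and the index $y_i$ runs only over $\{\boxx, \hook\}$. One can either use the LP of Lemma~\ref{lemma:LP} with $p_\col \equiv 0$ imposed by hand, or simply apply the objective computation from that lemma's proof directly to the tighter formulation of Lemma~\ref{lemma:symmetrization}; in either case no PPT relaxation will be required.

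With this restriction the objective collapses. Since $\vec t = (t_\col, t_\boxx, t_\hook) = (-1, \tfrac12, 0)$ and $t_\hook = 0$, every $y^n \in \{\boxx,\hook\}^n$ containing at least one $\hook$ contributes nothing, leaving only the single term at $y^n = \boxx^n$:
\[
\vec{t}^{\otimes n} \cdot \vec p \;=\; \sum_{y^n \in\{\boxx,\hook\}^n}\prod_i t_{y_i}\, p_{y^n} \;=\; \left(\tfrac12\right)^{\!n} p_{\boxx^n} \;\leq\; 2^{-n}.
\]
Hence $\max_{\ket\psi \in \anti^{\otimes n}} \tr \psi_{A^n}^2 \leq 2^{-n}$, and Lemma~\ref{lemma:renyi-2} yields $E_F(\alpha_3^{\otimes n}) \geq -\log_2 2^{-n} = n$.

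For the matching upper bound I would invoke subadditivity of $E_F$ together with the explicit mixture $\alpha_3 = \tfrac13 \sum_{i<j}|\psi_{ij}\rangle\!\langle\psi_{ij}|$ of three orthogonal two-qubit singlets, each carrying reduced-state entropy $1$: this gives $E_F(\alpha_3) \leq 1$ and hence $E_F(\alpha_3^{\otimes n}) \leq n\,E_F(\alpha_3) \leq n$. Combining yields $E_F(\alpha_3^{\otimes n}) = n$ for every $n$, and therefore $E_C(\alpha_3) = \lim_n \tfrac1n E_F(\alpha_3^{\otimes n}) = 1$. There is no substantial obstacle here once Lemma~\ref{lemma:symmetrization} is in hand; the entire argument rests on two coincident vanishings — $V_\col = 0$ (because $d = 3 < 4$) and $t_\hook = 0$ (structural) — which together force the right-hand side of~\eqref{eq:yura} to equal one.
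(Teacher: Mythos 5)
Your proposal is correct and follows essentially the same route as the paper: the vanishing of $V_{\col}$ for $d=3$ kills the $\col$-components, so the objective $\vec{t}^{\,\ox n}\cdot\vec{p}$ over $y^n\in\{\boxx,\hook\}^n$ is at most $2^{-n}$, and Lemmas~\ref{lemma:renyi-2} and~\ref{lemma:symmetrization} give $E_F(\alpha_3^{\ox n})\geq n$. Your explicit upper bound via the singlet decomposition $\alpha_3=\tfrac13\sum_{i<j}\proj{\psi_{ij}}$ and subadditivity is exactly the argument the paper dismisses as "trivial" (having stated it just before introducing the corollary).
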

\begin{proof}
  As mentioned earlier, the case $d=3$ is special because the irreducible representation $\col$
  is zero-dimensional, and hence doesn't appear in the above linear programme:
  $p_{y^n} = 0$ if any $y_i$ equals $\col$. But then the objective
  function of the linear programme (\ref{eq:d-LP}) is upper bounded by $2^{-n}$ 
  since that is the largest
  coefficient $t_{y^n}$, $y^n\in\{\boxx,\hook\}^n$ and $\sum_{y^n} p_{y^n}=1$.
  Thus, by Lemmas~\ref{lemma:renyi-2} and \ref{lemma:symmetrization},
  $E_F(\alpha_3^{\ox n}) \geq -\log_2 2^{-n} = n$, while the
  opposite inequality is trivial.
\end{proof}

\medskip
For $d\geq 4$ the irreducible representation $\col$ is present, and for all $y^n$
with an even number of it, the objective function of the linear programme
(\ref{eq:d-LP}) gets a contribution potentially larger than $2^{-n}$.
Motivated by the fact that (thanks to the LOCC monotonicity of $E_F$ under
twirling) $E_F(\alpha_d^{\ox n})$ monotonically decreases with $d$,
we aim to understand this linear programme for fixed $n$ but asymptotically
large $d$. Note that in the limit $d \rightarrow \infty$, the matrix $T_d$ converges to
\[
  {T}_\infty = \left(\begin{array}{rrr}
                 1 & 1 & -1 \\
                -2 & 1 &  0 \\
                 1 & 1 &  1
             \end{array}\right).
\]
Thus we find that $E_F(\alpha_d^{\otimes n})$
for fixed $n$ and arbitrary $d$ is lower bounded by $-\log_2 \zeta_n$, where 
\begin{equation}\begin{split}
  \zeta_n:= \max\, \vec{t}^{\ox n} \cdot \vec{p} 
  \quad \text{s.t. }       \vec{p}                             &\geq 0, \\
                                         \vec{1} \cdot \vec{p} &=    1, \\
                       {T}_\infty^{\ox n}\vec{p}               &\geq 0.
  \label{eq:infty-LP}
\end{split}\end{equation}
with the additional constraint that $p_{y^n}$ is permutation invariant.

From the linear programme we now eliminate all constraints that involve the first row of $T_\infty$, thereby only increasing the value of the linear programme. Mathematically, we delete the first row of $T_\infty$ and now have
\[
  \left(\begin{array}{rrr}
     -2 & 1 & 0 \\  1 & 1 & 1 
      \end{array}\right).
\]
We then see that we do not need to consider vectors $y_n$ which contain one $\hook$
or more. Namely, in the expansion of the state $\Omega$ every single
occurrence of $\rho_\hook$ may be replaced with 
$\frac{1}{3}\rho_\col + \frac{2}{3}\rho_\boxx$,
turning a feasible point into a new feasible point, and not
changing the value of the objective function.
But then, since the entries of the last column are never used again in the constraints,
we may delete it leaving a truncated matrix and a truncated vector
\[
   T := \left(\begin{array}{rr}
          -2 & 1 \\  1 & 1 
       \end{array}\right), \quad
   \vec{t} = \left( -1, \ \tfrac12 \right)\ .
\]
(Note that we may relax the normalization condition $\vec{1}\cdot\vec{p} = 1$
w.l.o.g.~to $\leq 1$.)
\begin{corollary}
  \label{prop:simpler-LP}
  For any $d$ and $n$, $E_F(\alpha_d^{\ox n}) \geq -\log_2 \zeta_n$, where
  \begin{equation}\begin{split}
    \zeta_n = 
    \max\, \vec{t}^{\,\ox n} \cdot \vec{p} = 2^{-n}\sum_{y^n\in\{\col, \boxx \}^n} p_{y^n}(-2)^{|y^n|}
    \quad \text{s.t. }          \vec{p} &\geq 0, \\
                  \vec{1} \cdot \vec{p} &\leq 1, \\
                      -T^{\ox n}\vec{p} &\leq 0,
    \label{eq:simpler-LP}
  \end{split}\end{equation}
  where $p_{y^n}$ only depends on the number $|y^n|$ of occurrences of $\col$.
  
  Note that in this form the linear programme does not refer to $d$ any more; 
  it reflects the limit $d \rightarrow \infty$ completely.
  \qed
\end{corollary}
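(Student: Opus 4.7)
The plan is to start from the LP~\eqref{eq:infty-LP} and show, in three relaxation steps, that its optimum $\zeta_n$ equals the value of the LP~\eqref{eq:simpler-LP}: (i) drop the top row of $T_\infty$; (ii) argue that the optimum admits a maximizer $\vec p$ supported only on $y^n\in\{\col,\boxx\}^n$; (iii) relax the normalization from $\vec{1}\cdot\vec p=1$ to $\leq 1$.

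Step~(i) is immediate: discarding a positivity constraint can only enlarge the feasible set and hence can only raise the optimum. Step~(iii) is equally routine: the constraint $-T^{\otimes n}\vec p\leq 0$ is homogeneous in $\vec p$, so any feasible point with $\vec{1}\cdot\vec p<1$ may be rescaled up to equality; since $\zeta_n\geq 2^{-n}>0$ (achieved by the all-$\boxx$ assignment), the objective is positive at the optimum and rescaling does not decrease it.

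The conceptual core is step~(ii). After step~(i) the constraint matrix (rows two and three of $T_\infty$) has columns $(-2,1)^T$, $(1,1)^T$, $(0,1)^T$ for $\col,\boxx,\hook$ respectively, while the objective vector reads $(-1,\tfrac12,0)$. The key observation is the convex identity
\begin{equation*}
  \binom{0}{1} \;=\; \tfrac{1}{3}\binom{-2}{1} + \tfrac{2}{3}\binom{1}{1}, \qquad 0 \;=\; \tfrac{1}{3}(-1) + \tfrac{2}{3}\cdot\tfrac{1}{2},
\end{equation*}
i.e.\ the $\hook$ column \emph{and} the $\hook$ objective coefficient are the same convex combination of the $\col$ and $\boxx$ data. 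This motivates a local substitution: whenever $y^n$ has a $\hook$ at some position $i$, transfer the weight $p_{y^n}$ to weight $\tfrac{1}{3}p_{y^n}$ on the string obtained by replacing $y_i\mapsto\col$ and weight $\tfrac{2}{3}p_{y^n}$ on the string with $y_i\mapsto\boxx$. By the tensor-product structure of both $T_\infty^{\otimes n}$ and $\vec t^{\,\otimes n}$, replacing a single coordinate in this way preserves every remaining constraint value as well as the objective value (and trivially preserves normalization). Iterating this substitution over all positions shows that one may assume $p_{y^n}=0$ whenever any $y_i=\hook$; the $\hook$ column is then never referenced and can be deleted, leaving the $2\times 2$ matrix $T$ and the vector $\vec t=(-1,\tfrac12)$ of the corollary.

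Permutation invariance of $\vec p$ is preserved at every step, since the substitution in~(ii) can be applied symmetrically over the $n$ positions. The main (and essentially only) thing to verify is the convex identity underlying step~(ii); the rest of the argument is bookkeeping of constraints and normalizations.
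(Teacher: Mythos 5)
Your proof is correct and follows essentially the same route as the paper: drop the constraints coming from the first row of $T_\infty$, replace each occurrence of $\hook$ by the convex combination $\tfrac13\col+\tfrac23\boxx$ (whose consistency with both the remaining constraint columns and the objective coefficient you verify explicitly, which the paper leaves implicit), delete the now-unused column, and relax the normalization to an inequality. The only quibble is that you claim the chain of relaxations preserves the optimal value exactly, whereas only the inequality $\zeta_n^{\text{relaxed}}\geq\zeta_n^{\text{original}}$ is needed (and is all that the relaxation steps actually guarantee); this does not affect the validity of the corollary.
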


Now, all that is left to do is to find an upper bound on $\zeta_n$, which
we obtain by writing down the dual linear programme~\cite{LP-book} and
guessing a dual feasible point.

\begin{lemma} \label{lemma:threequarters}
$\zeta_n\leq (\frac{3}{4})^n$, hence $E_F(\alpha_d^{\ox n}) \geq n \log_2\frac{4}{3}$.
\end{lemma}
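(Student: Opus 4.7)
The plan is to apply linear programming duality to the LP in Corollary~\ref{prop:simpler-LP}. Its dual is
\[
\min\lambda \quad\text{s.t.}\quad \lambda\vec{1} \;\geq\; \vec{t}^{\,\otimes n} + T^{\otimes n}\vec{\mu}, \qquad \lambda\ge 0, \ \vec{\mu}\ge 0,
\]
with all inequalities componentwise; since $T$ is symmetric, $(T^{\otimes n})^T = T^{\otimes n}$. By weak duality, any feasible pair $(\lambda,\vec{\mu})$ certifies $\zeta_n\le \lambda$, so it suffices to exhibit one with $\lambda = (3/4)^n$.

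\emph{Symmetrisation.} The LP is invariant under simultaneous permutation of the $n$ tensor factors, so replacing any feasible $\vec{\mu}$ by its $S_n$-average preserves feasibility and value. Without loss of generality $\mu_{z^n}$ depends only on $j := |z^n|$ (the number of $\col$'s in $z^n$), and we write $\mu_{z^n} = f(j)$ for some $f:\{0,\ldots,n\}\to\mathbb{R}_{\ge 0}$.

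\emph{Reduction to $n+1$ inequalities.} Grouping $z^n$'s by $|z^n|=j$, the $2^n$ dual constraints collapse to $n+1$ inequalities indexed by $k := |y^n|$:
\[
(3/4)^n \;\geq\; (-1)^k(1/2)^{n-k} + \sum_{j=0}^n f(j)\, h_k(j), \qquad k=0,\ldots,n,
\]
where $h_k(j) := [x^j](1-2x)^k(1+x)^{n-k} = \sum_a\binom{k}{a}\binom{n-k}{j-a}(-2)^a$ is the contribution of the $z^n$'s with $|z^n|=j$ to $(T^{\otimes n}\vec{\mu})_{y^n}$ for any $y^n$ of weight $k$.

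\emph{Guessing $f$.} The $k=0$ constraint upper-bounds the total mass by $(3/4)^n-(1/2)^n$, while for $n$ even the $k=n$ constraint forces $\sum_j\binom{n}{j}(-2)^j f(j) \le (3/4)^n-1 < 0$; since $f\ge 0$, this requires $f$ to be supported on odd $j$. A natural one-parameter ansatz is therefore $f(j) = c\,a^j$ for $j$ odd and $f(j)=0$ for $j$ even, with $a,c\ge 0$. The generating-function identity
\[
\sum_{j\text{ odd}} a^j h_k(j) \;=\; \tfrac12\bigl[(1-2a)^k(1+a)^{n-k} - (1+2a)^k(1-a)^{n-k}\bigr]
\]
converts every dual constraint into an explicit scalar inequality in $(a,c)$; one picks $a$ so that the $k=0$ and $k=n$ constraints are simultaneously feasible at $\lambda=(3/4)^n$, fixes $c$ in the resulting interval, and checks the intermediate $k$'s. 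If a single-term ansatz is too rigid, I would enrich it to a finite nonnegative mixture $f(j) = \sum_\ell c_\ell a_\ell^j$ over odd $j$, which preserves the analytic form of each constraint while adding the flexibility to clear the interior $k$'s.

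Once $\zeta_n\le (3/4)^n$ is established, Lemmas~\ref{lemma:renyi-2} and~\ref{lemma:symmetrization} immediately yield $E_F(\alpha_d^{\otimes n}) \ge -\log_2\zeta_n \ge n\log_2(4/3)$, as claimed.

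\emph{Main obstacle.} The substantive step is verifying the intermediate constraints. The extreme $k=0$ and $k=n$ conditions are compatible at $\lambda=(3/4)^n$ for an easy reason (the ratio $((4/3)^n-1)/((3/2)^n-1)$ behaves well and is dominated by the available range of $B(a)/A(a) = \bigl[(1+2a)^n-(1-2a)^n\bigr]/\bigl[(1+a)^n-(1-a)^n\bigr]$), but the even-$k$ constraints with $k$ close to but strictly less than $n$ tighten as $n$ grows and can exclude the single-parameter feasible region for moderate $n$. The genuine work is to exhibit a choice of $f$ (possibly multi-parameter) and verify that every interior constraint remains within the slack allowed by $\lambda=(3/4)^n$.
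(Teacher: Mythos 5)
Your setup is the same as the paper's: pass to the dual LP, use permutation symmetry to collapse the $2^n$ dual variables and $2^n$ constraints to functions of the weights, and obtain the reduced family of inequalities with kernel $h_k(j)=[x^j](1-2x)^k(1+x)^{n-k}$ — all of this is correct and matches the paper's Eq.~(\ref{eq:dual-symmetrised}) after relabelling. But the proof stops exactly where the lemma's content begins: you never exhibit a dual feasible point achieving $(3/4)^n$, and you say so yourself in the ``main obstacle'' paragraph. A weak-duality argument with an unverified ansatz is not a proof, and your particular ansatz (a single geometric supported on odd weights) is not obviously salvageable: for instance, the even constraint $k=n-2$ forces $\frac{c}{2}\bigl[(1+2a)^{n-2}(1-a)^2-(1-2a)^{n-2}(1+a)^2\bigr]\gtrsim \frac14$, while the $k=0$ constraint forces $c\lesssim 2\bigl(\tfrac{3}{4(1+a)}\bigr)^n$; at $a=1/2$ these are already incompatible for large $n$, so the feasible window (if any) is confined to $a\in(\tfrac12,1)$ and still leaves all the remaining interior constraints unchecked. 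Also, your claim that the $k=n$ constraint ``requires $f$ to be supported on odd $j$'' only shows that $f$ must put \emph{some} mass on odd $j$, not that it vanishes on even $j$; indeed the certificate that actually works is supported on both parities.

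The missing idea is the specific certificate and the algebraic cancellation that makes it verifiable in one line. In your notation, take $f(j)=2^{-n}\beta^{\,n-j}$ for $j<n$ and $f(n)=0$ with $\beta=\tfrac12$. The full geometric sum $\sum_j \gamma\beta^{n-j}h_k(j)=\gamma(\beta+1)^{n-k}(\beta-2)^k$ evaluates in closed form via the Vandermonde convolution, and the correction for dropping the $j=n$ term is exactly $-\gamma(-2)^k$; choosing $\gamma=2^{-n}$ makes this correction cancel the objective term $(-1)^k 2^{k-n}$ identically, leaving every constraint in the form $z\geq 2^{-n}(1+\beta)^n(-1)^k\bigl(\tfrac{2-\beta}{1+\beta}\bigr)^k$. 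The choice $\beta=\tfrac12$ then makes the ratio $\tfrac{2-\beta}{1+\beta}=1$, so all $n+1$ constraints read $z\geq\pm(3/4)^n$ simultaneously, and $z=(3/4)^n$ is feasible. Without this (or some other explicitly verified certificate), the bound $\zeta_n\leq(3/4)^n$ is not established.
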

\begin{proof}
The dual linear programme to~\eqref{eq:simpler-LP} is given by 
\begin{equation}
  \min\, z  \quad \text{ s.t. }  \vec{q} \geq 0,\   z\vec{1} - S^{\ox n}\vec{q} \geq \vec{t}^{\,\ox n},
  \label{eq:simpler-dual-LP}
\end{equation}
where $ S = T^\top $ and $\top$ denotes matrix transposition. Its value equals $\zeta_n$ by linear programming duality. 

In words, a feasible $z$ in the dual linear programme is an upper bound on all
the vector entries of $\vec{t}^{\,\ox n}+S^{\ox n}\vec{q}$. (Caution: 
some of these may be negative, and so we are not talking about the 
sup-norm of this vector.) By duality, any such $z$ is going to be an
upper bound on $\zeta_n$~\cite{LP-book}.

The entries of $\vec{q}$ are labelled by strings $w^n \in \{\Psi,Q\}^n$,
and it is clear from the permutation symmetry of the matrix $S^{\ox n}$ and 
the vector $\vec{t}^{\,\ox n}$ that we may assume that $q_{w^n}$ only
depends on the number $k$ of $Q$'s in $w^n$:
\[
  \delta_k := q_{\Psi^{n-k}Q^k} \text{ and all permutations, for } k=0,\ldots,n.
\]
Then, also the constraints in the dual linear programme (\ref{eq:simpler-dual-LP}), which
are labelled by strings $v^n \in\{0,1\}^n$, depend only on the number $m$ of
$0$'s: for each string $v^n=0^{m}1^{n-m}$, $m=0,\ldots,n$, we get an
inequality
\begin{equation}
  z \geq (-1)^m\, 2^{m-n} + \sum_{k=0}^n \delta_k 
                                          \sum_{\ell=\max(0,k+m-n)}^{\min(k,m)}
                                           (-2)^\ell {m \choose \ell} {n-m \choose k-\ell}.
  \label{eq:dual-symmetrised}
\end{equation}

Numerical solutions of the linear programme (\ref{eq:dual-symmetrised}) suggest 
that in the dual only $\delta_1$ is populated and
the $\delta_j$ with $j \approx n$. 
Here we guess a dual feasible solution motivated by this. The ansatz
is only an approximation to the numerical findings; for some non-negative
$\beta<1$ and $\gamma$,
\begin{align*}
  \delta_k &= \gamma \beta^{n-k},\ \text{ for }k < n \\
  \delta_n &= 0.
\end{align*}
Clearly, all $\delta_j$ are now nonnegative;
inserting the above into the dual constraint~(\ref{eq:dual-symmetrised})
yields, for all $m$, that
\[
  z \geq (-2)^m 2^{-n} + \sum_{k=0}^n \gamma \beta^{n-k}
                                        \sum_{\ell=\max(0,k+m-n)}^{\min(k,m)}
                                        (-2)^\ell {m \choose \ell} {n-m \choose k-\ell}
                       - \gamma (-2)^m,
\]
noticing that the coefficient of the variable $\delta_n$ in
eq.~(\ref{eq:dual-symmetrised}) is $(-2)^m$.
First we evaluate the double sum; observe that it involves all pairs
of $k$ and $\ell$ for which the binomial coefficients are nonzero.
Hence, it is
\[\begin{split}
  \sum_{k,\ell} \gamma \beta^{n-k} (-2)^\ell {m \choose \ell} {n-m \choose k-\ell}
     &= \sum_{k,\ell} \gamma \beta^{n-(k-\ell)-\ell} (-2)^\ell {m \choose \ell} {n-m \choose k-\ell} \\
     &= \gamma \beta^n \sum_{k,\ell} \left(\frac{1}{\beta}\right)^{k-\ell} \left(\frac{-2}{\beta}\right)^\ell {m \choose \ell} {n-m \choose k-\ell} \\
     &= \gamma \beta^n \left( 1+\frac{1}{\beta}\right)^{n-m} \left( 1-\frac{2}{\beta}\right)^m \\
     &= \gamma (\beta+1)^{n-m} (\beta-2)^m.
\end{split}\]
This simplifies the constraints to
\[
  \forall m \quad z \geq (-2)^m\left( 2^{-n} - \gamma \right) + \gamma (\beta+1)^{n-m} (\beta-2)^m,
\]
so $z$ is the maximum of the right hand side over all $m=0,\ldots,n$, and we
want to choose $\beta$ and $\gamma$ in an optimal way to minimize this
maximum. First of all, the first term can grow very large due to the
occurrence of $2^m$ -- so the only reasonable choice is $\gamma = 2^{-n}$.
This reduces the constraints to
\[
  \forall m \quad z \geq 2^{-n}(1+\beta)^n (-1)^m \left( \frac{2-\beta}{1+\beta} \right)^m,
\]
so choosing $\beta = 1/2$, and neglecting the signs, makes the right
hand side $(3/4)^n$.

In conclusion, we obtain a dual feasible solution with this value,
yielding an upper bound $\zeta_n \leq (3/4)^n$,
which gives this as an upper bound on the maximum purity of a 
reduced state in $n$ copies of the antisymmetric subspace.
\end{proof}

\medskip
Theorem~\ref{th:EC-lower} is now a direct consequence of Lemma~\ref{lemma:threequarters}.

\section{Regularised Relative Entropy of Entanglement}
\label{sec:relent}
Here we show that the constant lower bound on the
entanglement cost of the antisymmetric state that we have calculated above
implies a constant lower bound on the regularised relative entropy of
entanglement with respect to separable states
[Eq.~\eqref{def:relentasympt}],
\begin{equation}
        \label{eq:cor3-restated}
E_{R,\mathrm{sep}}^\infty(\alpha_d) \geq \log_2\sqrt{\frac{4}{3}}\gtrsim
0.2075\ ,
\end{equation}
as stated in Corollary~\ref{cor:ERinf_lowerbnd}. 


\begin{proof}\textbf{of Corollary \ref{cor:ERinf_lowerbnd}.}
We want to prove the lower bound \eqref{eq:cor3-restated} of
Corollary~\ref{cor:ERinf_lowerbnd}.  Since $\alpha_d$ is invariant under
$g\otimes g$ (for unitary $g$), the minimisation in the relative entropy
can be taken over states obeying the same symmetry condition, i.e.
$$E_{R,\mathrm{sep}}(\alpha_d^{\ox n})=\min D(\alpha_d^{\ox n}||\sigma),$$
where $\sigma$ is separable and $\sigma=\sum_{y^n \in \{0, 1\}^n} p_{y^n} \rho_{y_1} \otimes \cdots \otimes \rho_{y_n}$ for $\rho_{0}=\alpha_d$ and $\rho_{1}=\sigma_d$. The relative entropy evaluates in this case to 
$$\tr \alpha_d^{\ox n} \log_2 \alpha_d^{\ox n} - \tr \alpha_d^{\ox n}
\log_2 p_{00\cdots 0} \alpha_d^{\ox n}= -\log_2 p_{00\cdots 0}.$$ In
summary, $E_{R,\mathrm{sep}}(\alpha_d^{\ox n}) = -\log_2 \max\limits_\sigma
\tr\sigma
P_{\anti}^{\ox n}$, where the maximum is over states $\sigma$ separable
across $A^n:B^n$. Furthermore,
\[\begin{split}
  \max_{{\sigma \text{ separable} \atop \text{across }A^n:B^n}} \tr\sigma P_{\anti}^{\ox n}
       &= \max_{\ket{\alpha}\in A^n,\, \ket{\beta}\in B^n} 
                    \bra{\alpha}\bra{\beta} P_{\anti}^{\ox n} \ket{\alpha}\ket{\beta}    \\
       &= \max_{\ket{\alpha}\in A^n,\, \ket{\beta}\in B^n,\, \ket{\psi}\in\anti^{\ox n}}
                    \bigl| \bra{\alpha}\bra{\beta} \psi \rangle \bigr|^2                 \\
       &= \max_{\ket{\psi}\in\anti^{\ox n}}
                    \bigl\| \tr_{B^n} \proj{\psi} \bigr\|_\infty,
\end{split}\]
where the first equality is by convexity, the second by choosing $\ket{\psi}$ 
as the projection of $\ket{\alpha}\ket{\beta}$ into $\anti^{\ox n}$,
and the third by the Schmidt decomposition.
The expression in the last line is upper bounded by the 
square root of the maximum purity, which we showed above to be smaller or equal to $(3/4)^n$.
Hence, $E_{R,\mathrm{sep}}(\alpha_d^{\ox n}) \geq n \log_2\sqrt{\frac{4}{3}}$,
and we get the constant lower bound of $\log_2\sqrt{\frac{4}{3}} \approx 0.2075$
for $E_{R, \mathrm{sep}}^\infty(\alpha_d)$.
\end{proof}
In contrast, the calculation of~\cite{audenaert-2001-87} gave
$E_{R,\mathrm{PPT}}^\infty(\alpha_d) = \log_2\frac{d+2}{d}$. This shows, in particular, that $E_{R,\mathrm{PPT}}^\infty$ differs from $E_{R,\mathrm{sep}}^\infty$ on Werner states. We conclude that squashed entanglement can be much smaller than the regularised relative entropy of entanglement with respect to separable states; the opposite separation was known thanks to the ``flower states'' of~\cite{Horodecki2005}.

\medskip
We close this section by showing the asymptotic continuity of the regularised 
relative entropy of entanglement .
\begin{proposition}[\cite{christandlPhD}]
\label{prop-rel-ent-cont}
The regularised relative entropy of entanglement $E^{\infty}_{R,\mathrm{sep}}$ 
is asymptotically continuous, i.e.~there is a function
$\delta(\epsilon)$ with $\delta(\epsilon) \rightarrow 0$ for
$\epsilon \rightarrow 0$ such that for all $||\rho-\sigma||_1 \leq
\epsilon$
$$ |E^\infty_{R,\mathrm{sep}}(\rho)-E^\infty_{R,\mathrm{sep}}(\sigma)|\leq \delta(\epsilon) \log
d,$$ where $d$ is the dimension of the system supporting $\rho$
and $\sigma$. In fact the proof shows that $\delta(\epsilon)$ can be taken
as $2(\epsilon+h(\epsilon))$, where $h$ denotes the binary entropy function.
The same statement is true for the regularised relative entropy of entanglement with respect to PPT states, $E^\infty_{R,\mathrm{PPT}}$.
\end{proposition}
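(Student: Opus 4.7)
My plan is to reduce the regularised statement to a single-copy-style continuity argument by exploiting a common convex decomposition of $\rho$ and $\sigma$, and then to combine the two sides of the estimate through the triangle inequality. First I would invoke the elementary decomposition lemma obtained from the Jordan decomposition of $\rho-\sigma$: for $\|\rho-\sigma\|_1 \leq \epsilon$ there exist states $\omega, \tau_\rho, \tau_\sigma$ with
\[
\rho = (1-\epsilon/2)\omega + (\epsilon/2)\tau_\rho, \qquad \sigma = (1-\epsilon/2)\omega + (\epsilon/2)\tau_\sigma,
\]
so that $\rho$ and $\sigma$ share a common component $\omega$. The proposition then reduces to showing $|E_{R,\mathrm{sep}}^\infty(\rho) - E_{R,\mathrm{sep}}^\infty(\omega)| \leq (\epsilon+h(\epsilon))\log d$ (and the analogous bound for $\sigma$), after which the triangle inequality produces the claimed $\delta(\epsilon)=2(\epsilon+h(\epsilon))$.

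The easier direction $E_{R,\mathrm{sep}}^\infty(\rho) - E_{R,\mathrm{sep}}^\infty(\omega) \leq (\epsilon/2)\log d$ I would obtain from convexity of $E_{R,\mathrm{sep}}^\infty$. Convexity at the regularised level follows from convexity and sub-additivity of the single-copy $E_{R,\mathrm{sep}}$ together with the binomial expansion of $\rho^{\otimes n}$ as a convex combination of tensor products of $\omega$'s and $\tau_\rho$'s: applying single-copy convexity summand-wise, then sub-additivity, and dividing by $n$ before taking the limit, yields
\[
E_{R,\mathrm{sep}}^\infty(\rho) \leq (1-\epsilon/2)\, E_{R,\mathrm{sep}}^\infty(\omega) + (\epsilon/2)\, E_{R,\mathrm{sep}}^\infty(\tau_\rho) \leq (1-\epsilon/2)\, E_{R,\mathrm{sep}}^\infty(\omega) + (\epsilon/2)\log d,
\]
where the last step uses the universal bound $E_{R,\mathrm{sep}}^\infty \leq E_F \leq \log d$. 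Subtracting $E_{R,\mathrm{sep}}^\infty(\omega)$ from both sides gives the result.

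The reverse inequality $E_{R,\mathrm{sep}}^\infty(\omega) - E_{R,\mathrm{sep}}^\infty(\rho) \leq (\epsilon+h(\epsilon))\log d$ will be the hard part, since convexity alone produces an inequality in the wrong direction (an ``inverse convexity'' that is generally false). My approach is to construct, for each $n$, an explicit separable candidate $\tilde\omega^*_n$ for $\rho^{\otimes n}$ out of a near-optimal separable state $\omega^*_n$ for $\omega^{\otimes n}$, by mirroring the binomial expansion of $\rho^{\otimes n}$. Concretely, I would take
\[
\tilde\omega^*_n := \sum_{S \subseteq [n]} (1-\epsilon/2)^{n-|S|}(\epsilon/2)^{|S|}\,\omega^*_{\bar S}\otimes\eta_\rho^{\otimes S},
\]
where the sum runs over subsets $S$ of sites, $\omega^*_{\bar S}$ is the permutation-symmetric near-optimum of $\omega^{\otimes|\bar S|}$ placed on the sites in $\bar S$, and $\eta_\rho$ is a product separable state obtained from the marginals of $\tau_\rho$ regularised with $\openone/d$ to guarantee full support. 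Bounding $D(\rho^{\otimes n}\|\tilde\omega^*_n)$ via joint convexity of the relative entropy and a term-by-term Fannes-type comparison on the $\tau_\rho$ factors, and then dividing by $n$ and letting $n\to\infty$, produces the desired lower bound on $E_{R,\mathrm{sep}}^\infty(\rho)$ in terms of $E_{R,\mathrm{sep}}^\infty(\omega)$ with additive loss $(\epsilon+h(\epsilon))\log d$.

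Assembling both directions via the triangle inequality yields $|E_{R,\mathrm{sep}}^\infty(\rho) - E_{R,\mathrm{sep}}^\infty(\sigma)| \leq 2(\epsilon+h(\epsilon))\log d$, matching the explicit $\delta(\epsilon)$ in the statement. The identical argument applies verbatim to $E_{R,\mathrm{PPT}}^\infty$, since the only structural features it uses of the set of free states are convexity, closure under tensor products, and containment of $\openone/d$ as a full-support regulariser---all of which hold for the set of PPT states as well.
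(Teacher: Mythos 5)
Your setup (common-component decomposition of $\rho$ and $\sigma$ via the Jordan decomposition, triangle inequality, convexity for one direction) is sound, and the convexity half is essentially the paper's own argument (which invokes convexity of $E^\infty_{R,\mathrm{sep}}$ from Donald--Horodecki--Rudolph). The genuine gap is in your ``hard'' direction, and it is a direction-of-inequality problem. You need a \emph{lower} bound on $E_{R,\mathrm{sep}}(\rho^{\otimes n})$ in terms of $E_{R,\mathrm{sep}}(\omega^{\otimes n})$ (equivalently, an upper bound on how much entanglement the component $\omega$ can have in excess of the mixture $\rho$). But your construction exhibits one particular separable state $\tilde\omega^*_n$ and bounds $D(\rho^{\otimes n}\|\tilde\omega^*_n)$ from above; since $E_{R,\mathrm{sep}}$ is a minimum over separable states, this can only ever produce an \emph{upper} bound on $E_{R,\mathrm{sep}}(\rho^{\otimes n})$ --- i.e.\ a second proof of the easy direction --- and says nothing about how small $E_{R,\mathrm{sep}}(\rho^{\otimes n})$ could be relative to $E_{R,\mathrm{sep}}(\omega^{\otimes n})$. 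Joint convexity of the relative entropy points the same (wrong) way, so no amount of care in constructing $\tilde\omega^*_n$ or in the ``Fannes-type comparison'' will close this.

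The missing ingredient is the convexity-defect inequality of Linden--Popescu--Schumacher--Westmoreland, extended by Synak-Radtke and Horodecki to any convex set containing the maximally mixed state, Eq.~(\ref{ineq-relent-ineq}) of the paper: $\sum_i p_i E_{R,\mathrm{sep}}(\tau_i) - E_{R,\mathrm{sep}}\bigl(\sum_i p_i \tau_i\bigr) \leq H\bigl(\{p_i\}\bigr)$. This is the one genuinely nontrivial fact needed: a mixture cannot have much \emph{less} relative entropy of entanglement than the average over its components. The paper applies it to the binomial expansion of $\gamma^{\otimes N}$, where $\gamma=(1-\epsilon)\rho+\epsilon\tilde\rho$ is the Alicki--Fannes common \emph{mixture} (your common-component decomposition would serve equally well), then strips the $\tilde\rho$ factors by LOCC monotonicity of the partial trace and restores $\rho^{\otimes N}$ by subadditivity, which is exactly where the $N(\epsilon\log d + h(\epsilon))$ loss arises. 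Without this inequality or an equivalent, the reverse direction does not follow from convexity, subadditivity and monotonicity alone.
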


\begin{proof}
Let $\|\rho- \sigma\|_1 = \epsilon>0$, where $\rho$ and $\sigma$
are $d$-dimensional states. According to Alicki and Fannes
\cite{AliFan04}, there are states $\gamma$, $\tilde{\rho}$ and
$\tilde{\sigma}$ with $\gamma=(1-\epsilon)\rho+\epsilon
\tilde{\rho}=(1-\epsilon) \sigma+\epsilon \tilde{\sigma}$. If we
succeed to prove asymptotic continuity on mixtures, i.e.~
\begin{equation}
        \label{eq-rel-cont-mixt} |E^\infty_{R,\mathrm{sep}}(\rho)-E^\infty_{R,\mathrm{sep}}(\gamma)|
\leq  \frac{\delta(\epsilon)}{2} \log d,
\end{equation} 
then continuity for $\rho$ and
$\sigma$ follows by use of the triangle inequality:
$$|E^\infty_{R,\mathrm{sep}}(\rho)-E^\infty_{R,\mathrm{sep}}(\sigma)| \leq |E^\infty_{R,\mathrm{sep}}(\rho)-E^\infty_{R,\mathrm{sep}}(\gamma)|+|E^\infty_{R,\mathrm{sep}}(\gamma)-E^\infty_{R,\mathrm{sep}}(\sigma)| \leq \delta(\epsilon) \log d.$$
The main step in the proof of the
estimate~(\ref{eq-rel-cont-mixt}) is the following inequality for
an ensemble $\{ p_i, \tau_i\}$,
\begin{equation} \label{ineq-relent-ineq}
        \sum_i p_i E_{R,\mathrm{sep}}(\tau_i) - E_{R,\mathrm{sep}}\left(\sum_i p_i \tau_i\right) 
          \leq H\left(\sum_i p_i \tau_i\right) -\sum_i p_i H(\tau_i) \leq H\left(\sum_i p_i \proj{i}\right),
\end{equation}
where $\ket{i}$ denotes an orthonormal basis.
Inequality~(\ref{ineq-relent-ineq}) has first been proven for the
relative entropy with respect to the set of separable
states~\cite{linden05} (see also \cite{EFPPW00}) and then been extended to hold for any convex set that includes
the maximally mixed state~\cite{SynHor05}. Here, it implies the
following estimate
$$E_{R,\mathrm{sep}}(\gamma^{\otimes N}) \geq \sum_k \epsilon^k
(1-\epsilon)^{N-k} \binom{N}{k} E_{R,\mathrm{sep}}(\rho^{\otimes (N-k)} \otimes
\tilde{\rho}^{\otimes k})- N h(\epsilon), $$ where $h(\epsilon)$
is the Shannon entropy of the distribution $(\epsilon,
1-\epsilon)$. We will now replace all $\tilde{\rho}$'s on the RHS
by $\rho$'s. This is done in two steps: i) remove the states of
the form $\tilde{\rho}$ on the RHS, since the partial trace
operations is an LOCC operation the RHS can only decrease, ii)
append the states $\rho$ and apply the inequality
$$ E_{R,\mathrm{sep}}(\rho^{\otimes N}) \leq E_{R,\mathrm{sep}}(\rho^{\otimes (N-k)})+kE_{R,\mathrm{sep}}(\rho),$$
which holds by subadditivity of $E_{R,\mathrm{sep}}$. This gives
\begin{eqnarray*} E_{R,\mathrm{sep}}(\gamma^{\otimes N})
        &\geq& \sum_k \epsilon^k (1-\epsilon)^{N-k} \binom{N}{k}
        E_{R,\mathrm{sep}}(\rho^{\otimes (N-k)} \otimes \tilde{\rho}^{\otimes k})- N h(\epsilon) \\
        &\stackrel{i)}{\geq}& \sum_k \epsilon^k (1-\epsilon)^{N-k} \binom{N}{k}
        E_{R,\mathrm{sep}}(\rho^{\otimes (N-k)} )- N h(\epsilon) \\
        &\stackrel{ii)}{\geq}& \sum_k \epsilon^k (1-\epsilon)^{N-k} \binom{N}{k}
        (E_{R,\mathrm{sep}}(\rho^{\otimes N})-k E_{R,\mathrm{sep}}(\rho) )- N h(\epsilon) \\
        &=& E_{R,\mathrm{sep}}(\rho^{\otimes N}) - \sum_k k \epsilon^k
        (1-\epsilon)^{N-k} \binom{N}{k} E_{R,\mathrm{sep}}(\rho)- N h(\epsilon) \\
        &=& E_{R,\mathrm{sep}}(\rho^{\otimes N}) - N\epsilon E_{R,\mathrm{sep}}(\rho)- N h(\epsilon) \\
        &\geq& E_{R,\mathrm{sep}}(\rho^{\otimes N}) - N (\epsilon \log d+ h(\epsilon) ) \\
        &\geq& E_{R,\mathrm{sep}}(\rho^{\otimes N}) - N (\epsilon + h(\epsilon))\log d.
\end{eqnarray*}
The last equality sign is the evaluation of the mean value of the
binomial distribution. Since the above calculation holds for all
$N$, this shows
$$E_{R,\mathrm{sep}}^\infty(\gamma) \geq E_{R,\mathrm{sep}}^\infty(\rho)-\frac{\delta(\epsilon) }{2}\log d$$
for $\delta(\epsilon) :=2(\epsilon + h(\epsilon))$. Conversely, the
convexity of $E_{R,\mathrm{sep}}^\infty$ \cite{DoHoRu02} implies
$$E_{R,\mathrm{sep}}^\infty(\gamma) \leq (1-\epsilon) E_{R,\mathrm{sep}}^\infty(\rho)+\epsilon
E_{R,\mathrm{sep}}^\infty (\tilde{\rho}) \leq E_{R,\mathrm{sep}}^\infty(\rho) +\epsilon \log
d.$$ This concludes the proof of the
estimate~(\ref{eq-rel-cont-mixt}) and the proposition. The exact same reasoning applies to $E^\infty_{R,\mathrm{PPT}}$.
\end{proof}

\medskip
A vital ingredient in the proof was
inequality~(\ref{ineq-relent-ineq}), which bounds the strength of
the convexity of the relative entropy. Prior to this work, the
same inequality has been used in~\cite{HHHO05b} to prove 
that the relative entropy of entanglement cannot be locked. As both entanglement of
purification and formation are lockable, a simple translation of
inequality~(\ref{ineq-relent-ineq}) to these measures is not
possible. Other ways to verify that entanglement
cost under LOCC and LOq (local operations with a sublinear amount of quantum communication) are asymptotically continuous will have to be found.

\section{Conclusion}
\label{sec:conclusion}
We have shown a way of -- in principle -- calculating the R\'{e}nyi-2 entropic
version of the entanglement of cost of the $d\times d$-antisymmetric state
via convex optimisation and more specifically, semidefinite programming.
Using a linear programming relaxation we showed a constant lower bound,
independent of $d$. Tighter relaxations are possible, in principle 
capable of obtaining the exact value of the maximum purity of the
reduced state over all $\ket{\psi} \in \anti^{\ox n}$: in addition 
to the PPT condition of the state between $AB$ and $A'B'$, we should
impose that the state is shareable (or extendible) to more 
parties~\cite{dohertycomplete, Ioannou2007, oneandahalf, navascues1, navascues2}.
At the same time, we could show that the squashed 
entanglement of these states is asymptotically small, implying that also
their distillable key is asymptotically small.

We believe that our result is the strongest indication so far that
``quantum bound key'' exists: states with positive key cost to create them
(a notion not yet defined in the literature, and a little tricky
to formalize cleanly), while their distillable key is zero. At least
we show that the states have asymptotically vanishing distillable key
(it cannot be zero, as a lower bound of $\frac{1}{d}$ on $E_D$
is known); on the other hand, their entanglement cost does not vanish.

The technique to obtain the lower bound on $E_C(\alpha_d)$ is yet another
demonstration of the power of symmetry in entanglement theory; but to our
knowledge, with this work we provide first application of plethysms in this field. Unfortunately,
we could not prove the conjectured $E_C(\alpha_d) = 1$ as our PPT relaxation
cannot give anything better than $\approx 0.45$ as computer solutions of the
linear programme up to $n=12$ show (see Appendix~\ref{app:B}).
It remains to be investigated whether further constraints, for instance of shareability, can improve
the lower bound to $1$, or whether $E_C(\alpha_d) < 1$ holds. The latter would provide the first explicit counterexample to additivity.

In comparison to the large gap observed between the entanglement of formation and distillable key~\cite{generic-entanglement}, our work exhibits three advantages. Firstly, our example is constructive, secondly, we show that the distillable key can be made arbitrarily small and thirdly, we 
consider the entanglement cost, which is the right measure to compare with the distillable key, and which can be strictly smaller than the entanglement of formation~\cite{Hastings}. The distinction between entanglement cost and entanglement of formation is crucial here, as it was for the discovery of bound entanglement~\cite{boundentanglement}, since the asymptotic measure of distillable key has to be compared to an asymptotic measure of preparing the state. A further result in~\cite{generic-entanglement} shows that the one-way distillable key is generically small, even if entanglement of formation is large. In our work, in contrast, the one-way distillable key of the antisymmetric state $\alpha_d$ vanishes for all $d\geq 3$. 

Our results can readily be generalised to the multiparty entanglement of the state proportional to the antisymmetric projector onto several parties. The multiparty squashed entanglement and distillable key~\cite{yang} exhibit a behaviour similar to the two-party case. Due to the difficulty of classifying multiparty entanglement, it is not clear which multiparty generalisation of entanglement cost to use. Any such generalisation, however, should be larger than entanglement cost of the two-party state, to which our lower bound applies.

\section*{Acknowledgments}
After completion of this work, F.~Brand\~{a}o kindly pointed out to us that the states from~\cite{Horodecki2005} can be used to construct states with $E_C(\rho)\geq \half$ and $K_D(\rho)\leq \frac{2}{\log_2 d}$. 
MC was supported by the Swiss National Science Foundation
(grant PP00P2-128455), the National Centre of Competence in Research 'Quantum Science and Technology' and the German Science Foundation (grants \mbox{CH~843/1-1} and \mbox{CH~843/2-1}).
NS acknowledges support by the EU (QUEVADIS,
SCALA), the German cluster of excellence project MAP,
the Gordon and Betty Moore Foundation through Caltech’s Center for the
Physics of Information, and the NSF Grant No. PHY-0803371.
AW is supported by the European Commission, the
U.K.~EPSRC, the Royal Society, and a Philip Leverhulme Prize.
The Centre for Quantum Technologies is funded by the
Singapore Ministry of Education and the National Research Foundation
as part of the Research Centres of Excellence programme.

\begin{appendix}

\section{Representation Theory}
\label{app:A}
Here we review certain facts of the representation theory of $U(d)$,
the unitary group in dimension $d$, particularly related to plethysms.
For the basic concepts we refer the reader to textbooks such as~\cite{FultonHarris91}.
The concatenation of two representations is a called a plethysm. In our case, we consider a representation $V_\mu$ of $U(d)$ and concatenate it with a representation $V_\lambda$ of $U(\dim V_\mu)$ to yield the $U(d)$-representation
$$V_\lambda (V_\mu): g \mapsto V_\lambda (V_\mu(g)).$$
\begin{lemma}
\label{lem:rep} 
Let $d\geq 3$. The following two plethysms of $U(d)$ decompose into 
irreducible representations of $U(d)$ as follows:
\begin{align*}
  \sym^2(\wedge^2)   &\cong\col \oplus \boxx, \\
  \wedge^2(\wedge^2) &\cong\hook.
\end{align*}
The dimensions are given by
\begin{align*}
\dim \sym^2(\wedge^2)&=\frac{d(d-1)(d^2-d+2)}{8}, \\
\dim \col &=\frac{d(d-1)(d-2)(d-3)}{24},          \\
\dim \boxx &=\frac{(d+1)d^2(d-1)}{12},             \\
\dim \wedge^2(\wedge^2)=\dim \hook&=\frac{(d+1)d(d-1)(d-2)}{8}.
\end{align*}
Note that $\dim \col =0$ for $d=3$.
\end{lemma}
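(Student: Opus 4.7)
The plan is to combine the elementary splitting $\wedge^2\otimes\wedge^2 \cong \sym^2(\wedge^2) \oplus \wedge^2(\wedge^2)$ with an explicit Littlewood--Richardson decomposition of the tensor square on the left, and then to distribute the resulting irreducible summands among $\sym^2$ and $\wedge^2$ by a character computation. The tensor product $V_{(1,1)} \otimes V_{(1,1)}$ decomposes by Littlewood--Richardson (equivalently by Pieri, since each factor is a single column) as $V_{(2,2)} \oplus V_{(2,1,1)} \oplus V_{(1,1,1,1)}$, which in diagrammatic notation are $\boxx$, $\hook$, and $\col$. Each appears with multiplicity one, so each one lies entirely inside $\sym^2(\wedge^2)$ or $\wedge^2(\wedge^2)$.

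To carry out the distribution, I would evaluate characters on a diagonal element $g = \diag(x_1, \ldots, x_d)$. Here $\chi_{\wedge^2}(g) = e_2(x) = \sum_{i<j} x_i x_j$, and the standard identities
\[
  \chi_{\sym^2 V}(g) = \tfrac{1}{2}\bigl(\chi_V(g)^2 + \chi_V(g^2)\bigr), \qquad
  \chi_{\wedge^2 V}(g) = \tfrac{1}{2}\bigl(\chi_V(g)^2 - \chi_V(g^2)\bigr)
\]
applied to $V = \wedge^2(\complex^d)$ reduce the task to simplifying $\tfrac{1}{2}\bigl(e_2(x)^2 \pm e_2(x_1^2,\ldots,x_d^2)\bigr)$. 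A direct monomial count establishes the key identity $e_2(x_1^2,\ldots,x_d^2) = e_2(x)^2 - 2\, e_1(x)\, e_3(x) + 2\, e_4(x)$; combined with the dual Jacobi--Trudi expressions $s_{(2,2)} = e_2^2 - e_1 e_3$, $s_{(2,1,1)} = e_1 e_3 - e_4$, and $s_{(1,1,1,1)} = e_4$, this yields $\chi_{\sym^2(\wedge^2)} = s_{(2,2)} + s_{(1,1,1,1)}$ and $\chi_{\wedge^2(\wedge^2)} = s_{(2,1,1)}$, proving the two claimed isomorphisms.

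The dimension assertions then follow from Weyl's formula~\eqref{eq:Weyldim}: $\dim V_{(1,1,1,1)} = \binom{d}{4}$ is immediate, while applying \eqref{eq:Weyldim} to $(2,2,0,\dots,0)$ and $(2,1,1,0,\dots,0)$ gives the stated expressions for $\dim\boxx$ and $\dim\hook$. As a cross-check, with $n = \binom{d}{2} = \dim\wedge^2$ one verifies $\dim V_{(2,2)} + \dim V_{(1,1,1,1)} = \binom{n+1}{2}$ and $\dim V_{(2,1,1)} = \binom{n}{2}$.

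The only non-routine step is the second one: establishing the symmetric-function identity that splits $e_2(x)^2 \pm e_2(x_1^2,\ldots,x_d^2)$ cleanly into the target Schur polynomials. An alternative, more representation-theoretic route would be to embed $\wedge^2\otimes\wedge^2 \hookrightarrow (\complex^d)^{\otimes 4}$, invoke Schur--Weyl duality, and compute the sign with which the transposition $(13)(24) \in S_4$ acts on the one-dimensional $(\mathrm{sign}\otimes\mathrm{sign})$-isotypic component of each Specht module $[\lambda]$ (with respect to $\langle (12)\rangle \times \langle (34)\rangle$) for $\lambda \vdash 4$; this sign determines membership in $\sym^2$ or $\wedge^2$. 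Either way, one is performing a special case of the classical plethysm computation $s_\nu \circ s_{(1,1)}$ for $\nu = (2)$ and $\nu = (1,1)$.
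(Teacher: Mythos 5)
Your argument is correct, and it reaches the same decomposition by a character computation, but the key computational step is genuinely different from the paper's. The paper works with the $U(\ell)$-characters of $\sym^2$ and $\wedge^2$ (where $\ell=\binom{d}{2}$), performs the substitution $z_i \mapsto x_k x_l$, and then directly reorganises the resulting sum over pairs of index pairs $(kl)\leq(mn)$ (resp.\ $(kl)<(mn)$) into the semistandard-tableau expansions of $s_{\col}+s_{\boxx}$ and $s_{\hook}$ --- a self-contained but somewhat ad hoc manipulation of inequalities. You instead route the computation through the classical identities $\chi_{\sym^2 V}(g)=\tfrac12(\chi_V(g)^2+\chi_V(g^2))$ and $\chi_{\wedge^2 V}(g)=\tfrac12(\chi_V(g)^2-\chi_V(g^2))$, the monomial identity $e_2(x_1^2,\dots,x_d^2)=e_2^2-2e_1e_3+2e_4$ (which is correct; it follows from sorting the terms of $e_2^2$ by the size of the intersection of the two index pairs), and the dual Jacobi--Trudi expressions for $s_{(2,2)}$, $s_{(2,1,1)}$, $s_{(1^4)}$. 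This buys you a derivation assembled entirely from standard symmetric-function facts, at the cost of quoting more machinery; the paper's version is longer but elementary. Your preliminary Littlewood--Richardson step, $V_{(1,1)}^{\otimes 2}\cong V_{(2,2)}\oplus V_{(2,1,1)}\oplus V_{(1^4)}$ with multiplicity one, is logically redundant once the characters are computed (the character identities already determine the split), but it is a useful consistency check and makes the multiplicity-freeness transparent. The dimension computations via Weyl's formula and the cross-checks against $\binom{n+1}{2}$ and $\binom{n}{2}$ for $n=\binom{d}{2}$ all agree with the paper.
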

\begin{proof}
We will compute the decomposition of the representations by a decomposition of the corresponding characters.
The character of an irreducible representation of $U(\ell)$ with highest weight $\lambda$ is given by
\begin{equation} 
  \label{eq:Schur} 
  s_\lambda(z_1, \ldots, z_\ell)=\sum_{T} z_{T(1)}\cdots z_{T(\ell)},
\end{equation}
where the sum extends over all semi-standard Young tableaux of shape 
$\lambda$ with numbers $1, \ldots, \ell$, that is, over all fillings of the 
boxes of the Young diagram $\lambda$ with the numbers $1, \ldots, \ell$ such 
that they strictly decrease downwards and decrease weakly to the right. 

The characters of $\sym^2$ and $\wedge^2$ as representations of $U(\ell)$ are
$$s_{\sym^2}(z_1, \ldots, z_\ell)=\sum_{i\leq j} z_iz_j$$
$$s_{\wedge^2}(z_1, \ldots, z_\ell)=\sum_{i < j} z_iz_j.$$
Reducing the $U(\ell)$ representation, where $\ell=\frac{d(d-1)}{2}$ to a representation of $U(d)$ via its action on $\wedge^2$ corresponds to making the replacement $z_i \mapsto x_{k}x_l$, where $1\leq k < l\leq d$. Hence
$$s_{\sym^2(\wedge^2)}(x_1, \ldots, x_d)=s_{\sym^2}(x_1x_2, \ldots, x_{d-1}x_d)=\sum_{k<l, m<n, (kl)\leq (mn) } x_k x_l x_m x_n $$
The summation can be rewritten as
$k<l, m<n, k<m, l\leq n \text{ or } k<l, m<n, k<m, l> n \text{ or } k<l, m<n, k=m, l\leq n $
which can be condensed to 
$k<l, m<n, k\leq m,  l\leq n \text{ or } k<m<n< l $, and which results in the decomposition  
$$s_{\sym^2(\wedge^2)}(x_1, \ldots, x_d)=s_{\boxx}(x_1, \ldots, x_d)+s_{\col}(x_1, \ldots, x_d)$$
by use of Eq.~\eqref{eq:Schur}.
The second character takes the form
$$s_{\wedge^2(\wedge^2)}(x_1, \ldots, x_d)=s_{\wedge^2}(x_1x_2, \ldots, x_{d-1}x_d)=\sum_{k<l, m<n, (kl)<(mn)} x_k x_l x_m x_n.$$
The summation can be rewritten as
$k<l, m<n, k<m \text{ or } k<l, m<n, k=m, l<n$
which is equivalent to 
$k<l, k<m<n \text{ or } k=m, k<l<n .$
Relabeling in the second clause $m \leftrightarrow l$, we can combine both clauses to 
$k\leq l, k<m<n. $ Hence, we obtain
$s_{\wedge^2(\wedge^2)}(x_1, \ldots, x_d)=\sum_{k\leq l, k<m<n} x_k x_l x_m x_n=s_{\hook}(x_1, \ldots, x_d)$
where the latter equation follows from Eq.~\eqref{eq:Schur}.
The lemma follows since the decomposition of the characters is unique and in one-to-one relation with the decomposition of the representations themselves. The dimensions are computed with help of Weyl's dimension formula, equation~\eqref{eq:Weyldim}.
\end{proof}

\begin{lemma} 
Let $d\geq 3$. The projectors onto the subspaces $\col, \boxx$ and $\hook$ embedded into $\sym^2(\wedge^2)$ and $\wedge^2(\wedge^2)$, both embedded into $ABA'B'$ as in Lemma~\ref{lem:rep} are given by
\begin{align}
P_{\col}&=\frac{1}{24} \sum_{\pi \in S_4}\sign(\pi) \pi\\
P_{\boxx}&=\frac{1}{48} \left(e-(12)\right) \left(e-(34)\right) \left( e+(13)\right) \left( e+(24) \right) \left(  e-(12) \right)\left(  e-(34) \right) \\
P_{\hook}&=\frac{1}{4} \left( e-(12) \right) \left( e-(34) \right) - P_{\col}-P_{\boxx}.
\end{align}
where the order of the systems is $ABA'B'$.
\end{lemma}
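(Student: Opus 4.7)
The plan is to identify each $U(d)$-equivariant projector as an element of the symmetric-group algebra $\CC[S_4]$ acting by permutation of the four tensor factors of $ABA'B'$; this is justified by Schur--Weyl duality, which says that the commutant of the diagonal $U(d)^{\otimes 4}$-action on $(\complex^d)^{\otimes 4}$ is generated by these permutations. Within $\CC[S_4]$ I then locate each projector by combining classical Young-symmetriser theory with direct trace and orthogonality checks.

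For $P_\col$ I recognise $\tfrac{1}{24}\sum_\pi \sign(\pi)\pi$ as the primitive central idempotent for the sign representation of $S_4$: self-adjoint because $\sign(\pi)=\sign(\pi^{-1})$, and idempotent by the standard sign-idempotent calculation. Its image in $(\complex^d)^{\otimes 4}$ is $\wedge^4(\complex^d)$, which by Lemma~\ref{lem:rep} is precisely $\col$ with multiplicity one. For $P_\hook$ I use subtraction: the orthogonal projector onto $\wedge^2(\complex^d)_{AB}\otimes\wedge^2(\complex^d)_{A'B'}\subset ABA'B'$ is $\tfrac{1}{4}(e-(12))(e-(34))$, a product of two commuting pair-antisymmetrisers each squaring to twice itself. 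By Lemma~\ref{lem:rep} this subspace decomposes as $\col\oplus\boxx\oplus\hook$, three distinct and therefore mutually orthogonal $U(d)$-isotypic components, so $P_\hook$ is the orthogonal complement of $P_\col+P_\boxx$ within it, yielding the claimed formula.

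The main step is $P_\boxx$. Setting $b:=(e-(12))(e-(34))$ and $a:=(e+(13))(e+(24))$ gives $P_\boxx=\tfrac{1}{48}\,bab$; here $b$ is the column antisymmetriser and $a$ the row symmetriser for the standard Young tableau of shape $(2,2)$ with columns $\{1,2\},\{3,4\}$ and rows $\{1,3\},\{2,4\}$, so that $ba$ is the associated Young symmetriser. I would verify four properties. Self-adjointness $P_\boxx^*=P_\boxx$ is immediate since $a$ and $b$ are each self-adjoint and the product $bab$ is palindromic. Idempotence $P_\boxx^2=P_\boxx$ follows by combining $b^2=4b$ with the classical Young-symmetriser identity $(ba)^2=(n!/d_\lambda)\,ba=12\,ba$ (using $d_{(2,2)}=2$), which together yield $(bab)^2=4\,(ba)^2\,b=48\,bab$. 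The image lies in $\sym^2(\wedge^2)$ because the outer $b$ factors restrict it into $\wedge^2\otimes\wedge^2$, and a short conjugation check shows $(13)(24)\cdot bab=bab$, forcing the image into the $+1$-eigenspace of the pair-swap. Finally, $\tr P_\boxx=\tfrac{(d+1)d^2(d-1)}{12}=\dim\boxx$ as in Lemma~\ref{lem:rep}, computed by the standard formula $\tr(\pi)=d^{c(\pi)}$ with $c(\pi)$ the number of cycles of $\pi$. Together with the orthogonality $P_\col P_\boxx=0$ (which holds because $P_\col\cdot a=0$, as right-multiplication of the sign idempotent by any factor $e+(ij)$ vanishes), these four properties pin down $P_\boxx$ as the unique orthogonal projector onto the $\boxx$-copy inside $\sym^2(\wedge^2)$.

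The principal obstacle is the idempotence $(bab)^2=48\,bab$, or equivalently the Young-symmetriser identity $(ba)^2=12\,ba$ on which it rests. A brute-force expansion in $\CC[S_4]$ is unpleasant, so the cleanest route is to invoke the general Young-symmetriser normalisation $c_T^2=(n!/d_\lambda)\,c_T$ of~\cite{FultonHarris91}; the factor $48=4\cdot 12$ then emerges naturally, with the $4$ coming from $b^2=4b$ and $12=4!/d_{(2,2)}$ from the Young-symmetriser norm for shape $(2,2)$.
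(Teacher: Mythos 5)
Your proposal is correct and follows essentially the same route as the paper: both identify $P_{\col}$ as the sign idempotent, obtain $P_{\boxx}$ from the Young symmetriser of the standard tableau with columns $\{1,2\},\{3,4\}$ (whose outer antisymmetrisers force the image into $\wedge^2\otimes\wedge^2$), and define $P_{\hook}$ by subtraction from $\tfrac14(e-(12))(e-(34))$. You merely make explicit the verifications the paper leaves as ``readily verified'' --- self-adjointness, the normalisation $(ba)^2=12\,ba$ giving idempotence of $\tfrac{1}{48}bab$, and the trace check against Weyl's dimension formula --- all of which are sound.
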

\begin{proof}
All three representations are subrepresentations of $g \mapsto g^{\otimes 4}$ which decomposes, according to Schur-Weyl duality, into irreducible representations in the following way (for $d=3$, $\col$ does not appear):
$$ \col \oplus 3 \hook \oplus 2 \boxx \oplus 3\,\yng(3,1) \oplus \,\yng(4).$$
The isotypical subspaces can be constructed with help of Young projectors which are proportional to the formula (for $\lambda$ being one of the five irreducible representations)
$$Q_\lambda =\sum_T Q_T$$
where the sum goes overall all standard tableaux of shape $\lambda$ with numbers $1, \ldots, 4$ and where 
$$Q_T = \left( \sum_{\pi \in \cC(T)} \sign(\pi)\pi \right) \left( \sum_{\pi \in \cR(T)} \pi  \right)  $$
is proportional to the projector onto one copy of an irreducible representation with highest weight $\lambda$. 
From this we can readily verify the above formula for $\col$. For $\boxx$ we make the guess $T=
\begin{Young}
1&3\cr
2&4\cr
\end{Young}
$
and are lucky: since the corresponding space is antisymmetric when we exchange $1$ and $2$ and also when we exchange $3$ and $4$ it is contained in $(\wedge^2)^{\otimes 2}$. The projector onto $\hook$ follows from observing that the projector onto $(\wedge^2)^{\otimes 2}$ is given by $\frac{1}{4} \left( e-(12) \right) \left( e-(34) \right)$ and that all three, $\col, \boxx$ and $\hook$, have to add to this space.
\end{proof}

We define the corresponding quantum states by 
\begin{align}
\rho_{\col}&=\frac{24}{d (d-1) (d-2) (d-3)}P_{\col}, \\
\rho_{\boxx}&=\frac{12}{(d+1)d^2(d-1)}P_{\boxx},     \\
\rho_{\hook}&=\frac{8}{(d+1)d(d-1)(d-2)}P_{\hook}.
\end{align}

\begin{lemma} 
\label{lem:tildes}
$$\vec t:=(t_\col, t_\boxx, t_\hook)=(-1, \half, 0),$$
where $t_y = \tr \tilde\rho_{y} F_{A:A'}$ and $\tilde\rho_{y}=\tr_{BB'}\rho_y$. Equivalently, we can write
\begin{align}
  \tilde\rho_{\col}  &= \alpha, \\
  \tilde\rho_{\boxx} &= \frac{1}{4} \alpha + \frac{3}{4}\sigma, \\
  \tilde\rho_{\hook} &= \half \alpha + \half \sigma.
\end{align}
where $\sigma$ and $\alpha$ are proportional to the projectors onto the
symmetric and antisymmetric subspace, respectively.
\end{lemma}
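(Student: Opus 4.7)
The plan is to use $U(d)$-invariance to reduce the claim to computing a single linear functional on each $\tilde\rho_y$, and then to evaluate that functional by expanding the standard permutation-algebra formulae for $P_\col$, $P_\boxx$ and $P_\hook$ coming from the preceding lemma. First, since each $\rho_y$ is $g^{\otimes 4}$-invariant, the reduced state $\tilde\rho_y$ is $g\otimes g$-invariant on $AA'$. By the Werner-state classification (Schur's lemma applied to $\complex^d \otimes \complex^d \cong \sym^2 \oplus \wedge^2$), every such state has the form $c_y\alpha + (1-c_y)\sigma$ for a unique $c_y$, and since $\tr(\alpha F_{A:A'}) = -1$ and $\tr(\sigma F_{A:A'}) = 1$, one has $t_y = 1 - 2c_y$. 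Thus the two displayed forms of the lemma are equivalent, and it suffices to verify $(c_\col, c_\boxx, c_\hook) = (1, 1/4, 1/2)$, equivalently $(t_\col, t_\boxx, t_\hook) = (-1, 1/2, 0)$.

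For the $\col$ entry no calculation is needed: $\rho_\col$ is the unique $U(d)$-invariant state on $\wedge^4 \complex^d$, and the partial trace of a totally antisymmetric state over any two of the four factors is automatically the totally antisymmetric state on the remaining two. Hence $\tilde\rho_\col = \alpha$, giving $c_\col = 1$ and $t_\col = -1$.

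For the $\hook$ and $\boxx$ cases, I would write $F_{A:A'}\otimes \openone_{BB'}$ as the transposition $(13) \in S_4$ acting on $(\complex^d)^{\otimes 4}$, and evaluate $t_y = (\dim y)^{-1}\tr(P_y \cdot (13))$ by expanding $P_y$ into a signed sum of permutations and using the elementary formula $\tr \pi = d^{c(\pi)}$, where $c(\pi)$ counts the cycles of $\pi$. Substituting $P_\hook = \tfrac{1}{8}(e-(12))(e-(34))(e-(13)(24))$ gives eight terms; each product $\pi\cdot(13)$ is a short permutation whose cycle structure can be read off by a direct computation, and the eight contributions $d^{c(\pi\cdot(13))}$ cancel in pairs, yielding $\tr(P_\hook\cdot(13)) = 0$ and hence $t_\hook = 0$.

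For $\boxx$, I would exploit $P_\boxx = P_{\sym^2(\wedge^2)} - P_\col$ with the projector formula $P_{\sym^2(\wedge^2)} = \tfrac{1}{8}(e-(12))(e-(34))(e+(13)(24))$, whose $(13)(24)$-sign is opposite to that in $P_\hook$. The analogous eight-term cycle-count gives $\tr(P_{\sym^2(\wedge^2)}\cdot(13)) = \tfrac{1}{4}d(d-1)^2$, while $\tr(P_\col \cdot (13)) = -\binom{d}{4}$ since $(13)$ acts as $\sign((13)) = -1$ on $\wedge^4 \complex^d$. Dividing the difference by $\dim \boxx = (d+1)d^2(d-1)/12$ from Lemma~\ref{lem:rep} and simplifying yields $t_\boxx = 1/2$. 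The only real obstacle is this bookkeeping; every individual step is elementary, and the risk is purely arithmetic, so I would double-check the cycle structures of the composite permutations such as $(12)(34)(13)(24)(13)$ to ensure no sign is miscounted.
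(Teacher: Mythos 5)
Your proposal is correct, and its skeleton is the same as the paper's: reduce to Werner form on $AA'$ via $g\otimes g$-invariance so that $t_y=1-2c_y$, dispose of $\col$ by noting that a partial trace of a totally antisymmetric state is totally antisymmetric, and evaluate the remaining coefficients as traces of signed sums of permutations using $\tr \pi = d^{c(\pi)}$ with $F_{A:A'}\otimes \1_{BB'}=(13)$. Where you genuinely diverge is in which projector gets expanded and which is obtained by subtraction. The paper expands the $24$-term Young-symmetriser expression for $P_\boxx$ directly, reads off $t_\boxx=\half$, and then gets $t_\hook$ by subtracting the $\col$ and $\boxx$ contributions from $\tr P_{\anti\otimes\anti}(13)$. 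You instead use the factorised eight-term projectors $\tfrac18(e-(12))(e-(34))(e\pm(13)(24))$ onto $\sym^2(\anti)$ and $\wedge^2(\anti)=\hook$, compute $t_\hook$ directly (the eight cycle-counts do cancel in pairs, giving $d^3-2d^2+d-d^3+2d^2-d=0$), and recover $P_\boxx=P_{\sym^2(\anti)}-P_\col$. I checked your arithmetic: $\tr\bigl(P_{\sym^2(\anti)}(13)\bigr)=\tfrac18(2d^3-4d^2+2d)=\tfrac14 d(d-1)^2$, and adding $\binom{d}{4}$ gives $\tfrac{1}{24}d^2(d-1)(d+1)$, which divided by $\dim\boxx=\tfrac{1}{12}(d+1)d^2(d-1)$ is exactly $\half$. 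Your route buys a shorter and less error-prone calculation by sidestepping the $24$-term expansion entirely; to make it airtight you should record that $(13)(24)$ normalises the subgroup generated by $(12)$ and $(34)$, so the factorised expressions really are projectors, and that $P_\col\leq P_{\sym^2(\anti)}$ by the multiplicity-free decomposition of Lemma~\ref{lem:rep}, which is what licenses the subtraction defining $P_\boxx$.
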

\begin{proof}
Since all three states commute with the action of $g \otimes g$ ($g \in U(d)$) they are Werner states and thus of the form $p \alpha +(1-p) \sigma$ for $0\leq p \leq 1$. Note that the $p_i$ satisfy the equation $1-2p_i=\tr \tilde\rho_{i} F_{AA'}=\tr \rho_{i} (F_{AA'}\otimes \1_{BB'})$. 

We will now verify the claim state by state: The state $\tilde\rho_{\col}$ is the partial trace over a totally antisymmetric state and thus totally antisymmetric itself, hence $p_{\col}=1$ and thus $t_\col=-1$.

The state $\rho_\boxx$ is the normalisation of the projector
\begin{align*}
 P_{\boxx} &=\frac{1}{24} \big(  2 e-2 (12) - 2(34) + (13) +(14) + (23) + (24) + 2 (12)(34) + 2 (13)(24) + 2 (14)(23)\\
 &\quad  - (123) - (132) - (124) - (142) - (134) - (143) - (234) - (243) \\
 &\quad + (1234) + (1243) + (1342) + (1432) -2 (1324) - 2(1423) \big).
\end{align*}
Multiplying it from the right with the flip operator results in 
\begin{align*}
 P_{\boxx} (F_{AA'}\otimes \1_{BB'})
     &= P_{\boxx} (13) \\
     &\!\!\!\!\!\!\!\!\!\!\!\!\!\!\!\!\!\!\!\!\!\!\!\!\!\!\!\!
      = \frac{1}{24} \big(  2 (13) -2 (132) - 2(142) + e + (134) + (123) + (13)(24) + 2 (1432) + 2 (24) + 2 (1234)\\
     & - (23) - (12) - (1324) - (1342) - (14) - (34) - (1423) - (1243) \\
     & + (14)(23) + (243) + (142) + (12)(34) -2 (124) - 2 (234) \big).
\end{align*}
We now take the trace of this equation and find, since the trace of a cycle equals $d$,
$t_\boxx=\tr \rho_{\boxx} F_{AA'}\otimes \1_{BB'}=\half$ or $p_{\boxx}=\frac{1}{4}$.

Finally, $t_\hook$ is proportional to 
\begin{align*}
  \tr P_{\hook} (F_{AA'}\otimes \1_{BB'}) 
          &=\left(\frac{d(d-1)}{2}\right)^2 \tr (\alpha_{AB} \otimes \alpha_{A'B'})(F_{AA'}\otimes \1_{BB'}) \\
          &\qquad 
           -\tr P_{\col} F_{AA'}\otimes \1_{BB'}-\tr P_{\boxx} F_{AA'}\otimes \1_{BB'} \\
          &=\left(\frac{d(d-1)}{2}\right)^2  \frac{d}{d^2} 
           - (-1) \frac{d(d-1)(d-2)(d-3)}{24} -\half \frac{(d+1)d^2(d-1)}{12} = 0.
\end{align*}
This implies $p_{\hook}=\half$ and concludes the proof.
\end{proof}

Next we derive some formulas regarding the partial transposes of the
states $\rho_y$, $y \in \{ \col,\boxx,\hook\}$ with respect to the
$AB:A'B'$ cut. Due to the partial transpose we have to deal with
decomposing tensor products that involve dual representations. In order to
be able to continue to use the Young frame notation (rather than the
highest weight notation) in this situation, we use $SU(d)$ rather than
$U(d)$. The action of $SU(d)$ on $\wedge^d(\complex^d)$ is namely trivial
and allows us therefore to add full columns and convert negative weights
into positive ones. For the spaces, this difference is immaterial and
therefore of no concern to us.

\begin{lemma}
\label{lem:Psi-Q-PP}
The decomposition of the representation
$\anti \ox \overline{\anti}$ of $SU(d)$ is given by
\[
  \anti \otimes \overline{\anti}  \cong  \dcol \oplus \dhook \oplus \dboxx,
\]
where $\overline{\anti}$ denotes the representation dual to $\anti$. These irreducible representations have dimensions $1$, $d^2-1$ and $\left( \frac{d(d-1)}{2} \right)^2-d^2$, 
respectively, and their projections are
\begin{align*}
  \Psi &= \frac{2d}{d-1}(P_\anti \ox P_\anti)\bigl( \Phi_{AA'} \ox \Phi_{BB'} \bigr) (P_\anti \ox P_\anti)
        = \proj{\Psi}, \text{ for } \ket{\Psi}
        = \frac{1}{\sqrt{{d\choose 2}}}\sum_{i<j} \ket{\psi_{ij}}\ket{\psi_{ij}}, \\
  Q    &= \frac{2d}{d-2}(P_\anti \ox P_\anti)\bigl( (\1-\Phi)_{AA'} \ox \Phi_{BB'} \bigr) (P_\anti \ox P_\anti), \\
  \PP  &= P_\anti \ox P_\anti - Q - \Psi.
\end{align*}
\end{lemma}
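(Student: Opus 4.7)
The plan proceeds in three stages.

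\emph{Stage 1 (decomposition and dimensions).} As $SU(d)$-representations, $\overline{\wedge^2(\CC^d)}\cong \wedge^{d-2}(\CC^d)$, since $\wedge^d(\CC^d)$ is trivial for $SU(d)$. Pieri's rule then expands $\wedge^2 \otimes \wedge^{d-2}$ as the sum of $V_\lambda$ over Young diagrams $\lambda$ obtained from $(1,1)$ by adding $d-2$ boxes, no two in the same row. A short case analysis---split by whether rows $1$ and $2$ each receive an added box---yields exactly the three admissible shapes $(1^d)$, $(2,1^{d-2})$, and $(2,2,1^{d-4})$, each with multiplicity one (so the decomposition is multiplicity-free). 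The stated dimensions follow from Weyl's formula~\eqref{eq:Weyldim} and sum to $\binom{d}{2}^2=\dim(\anti\otimes\overline{\anti})$ as a sanity check.

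\emph{Stage 2 (the projection $\Psi$).} The trivial summand $\dcol$ is one-dimensional, and therefore spanned by the unique (up to scalar) invariant vector. In the orthonormal basis $\{\ket{\psi_{ij}}\}_{i<j}$ of $\anti$ this invariant vector is the maximally entangled state $\ket\Psi$, whose invariance under $g^{\otimes 2}\otimes\bar g^{\otimes 2}$ is the standard identity $(U\otimes\bar U)\ket{\Phi_W}=\ket{\Phi_W}$. For the second expression, the operator $(P_\anti\otimes P_\anti)(\Phi_{AA'}\otimes\Phi_{BB'})(P_\anti\otimes P_\anti)$ is also $U(d)$-invariant---each factor commutes with the relevant tensor component of the action---and nonzero on $\ket\Psi$, hence must be a scalar multiple of $\proj\Psi$. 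The prefactor $\tfrac{2d}{d-1}$ is pinned down by a direct trace calculation using $P_\anti=(\1-F)/2$ and the swap identities $\tr F=d$, $\tr[(X\otimes Y)F]=\tr(XY)$.

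\emph{Stage 3 (the projections $Q$ and $\PP$).} Set $Q_0:=(P_\anti\otimes P_\anti)((\1-\Phi)_{AA'}\otimes\Phi_{BB'})(P_\anti\otimes P_\anti)$. The same $U(d)$-invariance argument applies, so by Schur's lemma together with the multiplicity-free decomposition, $Q_0$ must act as a scalar on each irreducible. It vanishes on $\dcol$ because $\Phi_{BB'}\ket\Psi$ is proportional to $\ket\Phi_{AA'}\otimes\ket\Phi_{BB'}$, which is killed by $(\1-\Phi)_{AA'}$. To rule out a contribution from $\dboxx$, I use the rank estimate $\mathrm{rank}(Q_0)\leq\mathrm{rank}((\1-\Phi)_{AA'}\otimes\Phi_{BB'})=d^2-1$: since $\dim\dboxx>d^2-1$ for $d\geq 4$, a nonzero scalar on $\dboxx$ would force a rank larger than this bound, so that scalar must vanish (the case $d=3$ is automatic since $\dboxx=0$). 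Hence $Q_0$ is a positive multiple of the projection onto $\dhook$, and the normalisation $\tfrac{2d}{d-2}$ is determined by matching traces. Finally, $\PP:=P_\anti\otimes P_\anti - \Psi - Q$ is a projection onto the remaining irreducible $\dboxx$.

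The main obstacle is the precise trace bookkeeping that fixes the normalisation constants in Stages~2 and~3: while the conceptual core---invariance, orthogonality to $\Psi$, and the rank bound---is clean, computing the constants requires expanding $(\1-F_{AB})(\1-F_{A'B'})/4$ and evaluating four separate traces involving $(\1-\Phi)\otimes\Phi$ composed with $F_{AB}$, $F_{A'B'}$, and $F_{AB}F_{A'B'}$, which is the most computation-heavy step.
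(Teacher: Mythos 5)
Your proof is correct and follows essentially the same route as the paper: the decomposition via $\overline{\anti}\cong\wedge^{d-2}(\complex^d)$ and the Littlewood--Richardson (Pieri) rule, followed by Schur's lemma, orthogonality and trace normalisation to identify the projectors. Your rank bound $\operatorname{rank}(Q_0)\leq d^2-1<\dim\dboxx$ is a welcome extra detail that rules out a component on $\dboxx$, a point the paper's terser justification (``orthogonal to each other and have the correct trace'') leaves implicit.
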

\begin{proof}
The abstract decomposition follows from $\overline{\anti}\cong  \dualcol$ and from 
the Littlewood-Richardson rule that governs the decomposition of tensor 
products of irreducible representations of $SU(d)$ (see e.g~\cite{Fulton97}). The dimensions follow from Weyl's formula.

For the explicit form of the projectors, we only need to guess
the invariant one-dimensional subspace, and one other invariant
operator, which are our $\Psi$ and $Q$ -- since they are orthogonal
to each other and have the correct trace, they must be projectors. 
The third one is then their
complement with respect to $P_\anti \ox P_\anti$.
\end{proof}

\begin{lemma}
  \label{lem:pt-overlaps}
  For $\Psi$ and $Q$ as in Lemma~\ref{lem:Psi-Q-PP},
  \[
    \tr\rho_\col^\Gamma \Psi  =  \frac{2}{d(d-1)},\quad
    \tr\rho_\boxx^\Gamma \Psi =  \frac{2}{d(d-1)},\quad
    \tr\rho_\hook^\Gamma \Psi = -\frac{2}{d(d-1)},
  \]
  and
  \[
    \tr\rho_\col^\Gamma Q  = -\frac{2(d+1)}{d(d-2)},\quad
    \tr\rho_\boxx^\Gamma Q =  \frac{1}{d},\quad
    \tr\rho_\hook^\Gamma Q = -\frac{2}{d(d-2)}.
  \]
  (Then the expectations of $\PP$ are determined by 
  $\tr\rho_y^\Gamma\PP = 1-\tr\rho_y^\Gamma\Psi-\tr\rho_y^\Gamma Q$.)
\end{lemma}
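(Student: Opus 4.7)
I would compute both families of overlaps by moving the partial transpose onto the simpler operators $\Psi$ and $Q$ via the identity $\tr(X^\Gamma Y)=\tr(X Y^\Gamma)$, and then exploit the explicit structure of $\Psi^\Gamma$ and $Q^\Gamma$ together with the symmetries of the $\rho_y$.

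For the $\Psi$-overlaps, the key observation is that
\[
\ket{\Psi}=\tfrac{1}{\sqrt{\binom{d}{2}}}\sum_{i<j}\ket{\psi_{ij}}_{AB}\ket{\psi_{ij}}_{A'B'}
\]
is maximally entangled across the $AB:A'B'$ cut, with Schmidt basis the orthonormal basis $\{\ket{\psi_{ij}}\}_{i<j}$ of $\anti$. The standard formula for the partial transpose of a maximally entangled state then yields $\Psi^\Gamma=\tfrac{1}{\binom{d}{2}}F_{AB:A'B'}$ on $\anti_{AB}\otimes\anti_{A'B'}$ (and $0$ on its complement). Because the swap $F_{AB:A'B'}$ acts as $+1$ on $\sym^2(\anti)=\col\oplus\boxx$ and as $-1$ on $\wedge^2(\anti)=\hook$, I immediately get $\tr\rho_y^\Gamma\Psi=\tr\rho_y\Psi^\Gamma=\pm\tfrac{1}{\binom{d}{2}}$, with sign determined by whether $y$ sits in $\sym^2(\anti)$ or $\wedge^2(\anti)$.

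For the $Q$-overlaps, the extra ingredient I need is that $P_\anti$ is symmetric in the computational basis, so $P_\anti^T=P_\anti$, which promotes to the identity
\[
[\Pi_\anti Y\Pi_\anti]^\Gamma=\Pi_\anti Y^\Gamma\Pi_\anti
\]
for any operator $Y$, where $\Pi_\anti=(P_\anti)_{AB}\otimes(P_\anti)_{A'B'}$. Applying this to the explicit formula for $Q$ in Lemma~\ref{lem:Psi-Q-PP}, together with $\Phi_{AA'}^\Gamma=\tfrac{1}{d}F_{A:A'}$ and $\Phi_{BB'}^\Gamma=\tfrac{1}{d}F_{B:B'}$, gives
\[
Q^\Gamma=\tfrac{2}{d-2}\Pi_\anti(\1_{AA'}\otimes F_{B:B'})\Pi_\anti-\tfrac{2}{d(d-2)}\Pi_\anti F_{AB:A'B'}\Pi_\anti.
\]
Since $\rho_y=\Pi_\anti\rho_y\Pi_\anti$, the flanking projectors are absorbed by cyclicity of the trace, leaving a linear combination of $\tr\rho_y F_{B:B'}$ and $\tr\rho_y F_{AB:A'B'}$. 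The first reduces to $\tr\tilde\rho_y F$ after taking the partial trace over $AA'$; invoking the $A\leftrightarrow B,\,A'\leftrightarrow B'$ symmetry of $\rho_y$ (it acts as $+1$ on $\anti\otimes\anti$ since each transposition acts as $-1$ on $\anti$), this equals the values $t_y\in\{-1,\tfrac12,0\}$ computed in Lemma~\ref{lem:tildes}. The second trace is $\pm 1$ exactly as in the $\Psi$ computation. Substituting produces the claimed values.

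The main subtlety is the passage-through-the-projector identity $[\Pi_\anti Y\Pi_\anti]^\Gamma=\Pi_\anti Y^\Gamma\Pi_\anti$, which depends on the special feature that $P_\anti$ is real-symmetric in the computational basis. Once this step is secured, the rest is a careful bookkeeping of three well-understood swap expectations and the properties of $\Phi^\Gamma$.
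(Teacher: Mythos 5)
Your proof is correct and follows essentially the same route as the paper's: move the partial transpose onto $\Psi$ and $Q$, use $\Phi^\Gamma=\frac{1}{d}F$, and reduce everything to the swap expectations $\tr\rho_y(F_{A:A'}\otimes F_{B:B'})=\pm 1$ together with the values $t_y$ from Lemma~\ref{lem:tildes} (your explicit justification of $[\Pi_\anti Y\Pi_\anti]^\Gamma=\Pi_\anti Y^\Gamma\Pi_\anti$ via the real-symmetry of $P_\anti$ is a step the paper leaves implicit). One caveat: carried out as you describe, the $\hook$ entry comes out as $\tr\rho_\hook^\Gamma Q = 0 - \frac{2}{d(d-2)}\cdot(-1) = +\frac{2}{d(d-2)}$ — which is what the paper's own proof and the matrix $\hat{T}_d$ in the main text also give — so the minus sign in the lemma's statement is a typo, and your closing claim that substitution ``produces the claimed values'' holds only after correcting that sign.
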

\begin{proof}
  For the expectations of $\Psi$, note that
  \[\begin{split}
    \tr\rho_y^\Gamma \Psi &= \frac{2d}{d-1}\tr\rho_y^\Gamma(\Phi_{AA'}\ox\Phi_{BB'}) \\
                          &= \frac{2d}{d-1}\frac{1}{d^2}\tr\rho_y (F_{AA'}\ox F_{BB'})
  \end{split}\]
  since $\Phi^\Gamma = \frac{1}{d}F$. From the symmtries
  of the irreducible representations we know that 
  $\tr\rho_\col (F_{AA'}\ox F_{BB'}) = \tr\rho_\boxx (F_{AA'}\ox F_{BB'}) = 1$
  and $\tr\rho_\hook (F_{AA'}\ox F_{BB'}) = -1$.
  
  For $Q$, we proceed similarly:
  \[\begin{split}
    \tr\rho_y^\Gamma Q &= \frac{2d}{d-2}\tr\rho_y^\Gamma\bigl( (\1-\Phi)_{AA'}\ox\Phi_{BB'}\bigr) \\
         &= \frac{2d}{d-2}\tr\rho_y \left( \left(\1-\frac{1}{d}F_{AA'}\right) \ox \frac{1}{d}F_{BB'} \right) \\ 
         &= \frac{2}{d-2}\tr\tilde\rho_y F_{BB'} - \frac{2}{d(d-2)} \tr\rho_y(F_{AA'}\ox F_{BB'}),
  \end{split}\]
  where we have used the partial traces $\tilde\rho_y = \tr_{AA'} \rho_y$
  from Lemma~\ref{lem:tildes}. The same lemma and the symmetries of the $\rho_y$
  already used above yield the claimed values.
\end{proof}

\section{The Linear Programme}
\label{app:B}
Here we record some observations on the linear programming
relaxation studied in Section~\ref{sec:cost}.

\bigskip\noindent
{\bf The cases of $\mathbf{n=1,2}$, $\mathbf{4}$, \ldots, $\mathbf{12}$.}
For $n=1$ the linear programme is nearly trivial, and indeed it can be seen almost
immediately that the optimal solution is $p_{\col}=0$, $p_{\boxx}=1$, giving a value
of $1/2$ for the objective function.

For $n=2$, the objective function is given by
\[
  \vec{t}^{\,\ox 2} = \left[ 1,\ -\frac{1}{2},\ -\frac{1}{2},\ \frac{1}{4} \right],
\]
while the constraint matrix is
\[
  T^{\ox 2} = \left[\begin{array}{rrrr}
                 1 &  1 &  1 & 1 \\
                -2 &  1 & -2 & 1 \\
                -2 & -2 &  1 & 1 \\
                 4 & -2 & -2 & 1
              \end{array}\right].
\]
From this it becomes clear by inspection of the LP that the optimal vector has
the form $\vec{p} = [ x,\ 0,\ 0, 1-x ]^\top$, leaving as the only
nontrivial constraint, apart from $0\leq x\leq 1$, that
$-2x + (1-x) \geq 0$. Consequently, the optimal solution is $x=1/3$,
yielding a maximum value of $1/2$ of the objective function.
I.e., our method cannot give anything better than 
$E_C(\alpha_d) \geq 0.5$
For $n=4$, one can confirm (using a computer) that the optimal value
is $1/4$; for $n=6$ it is $1/7$, and for $n=8$, $n=10$ and $n=12$,
one finds optimal values
$\frac{5}{66} \approx 0.075757$, 
$\frac{12}{283} \approx 0.0424023$ 
and $\frac{26}{1119} \approx 0.023235$. 
The latter shows that the best lower bound obtainable with the
present method cannot be better than
$E_C(\alpha_d) \geq \frac{1}{12}\log_2\frac{1119}{26} \approx 0.452$.

\end{appendix}


%

\end{document}